\def\twon #1{\|#1\|}
\def\rainfty{\rightarrow\infty}
\def\argmin{\text{argmin}}
\def\bN{\mathbb{N}}
\def\bR{\mathbb{R}}
\def\cA{\mathcal{A}}
\def\cB{\mathcal{B}}
\def\cE{\mathcal{E}}
\def\cG{\mathcal{G}}
\def\cI{\mathcal{I}}
\def\cS{\mathcal{S}}
\def\cT{\mathcal{T}}
\def\cU{\mathcal{U}}
\def \qed {\hfill \vrule height6pt width 6pt depth 0pt}
\def\bee{\begin{equation}}
\def\ene{\end{equation}}
\def\beq{\begin{eqnarray}}
\def\enq{\end{eqnarray}}
\let\bbordermatrix\bordermatrix
\patchcmd{\bbordermatrix}{8.75}{4.75}{}{}
\patchcmd{\bbordermatrix}{\left(}{\left[}{}{}
\patchcmd{\bbordermatrix}{\right)}{\right]}{}{}
\newtheorem{remark}{Remark}
\newtheorem{prop}{Proposition}
\newenvironment{proof}{\begin{IEEEproof}}{\end{IEEEproof}}
\begin{document}
\title{Range-based Coordinate Alignment for Cooperative Mobile Sensor Network Localization}
\author{Keyou You,~\IEEEmembership{Senior Member,~IEEE}, Qizhu Chen, Pei Xie,
and Shiji Song,~\IEEEmembership{Senior Member,~IEEE}
		
\thanks{This research was supported by  National Natural Science Foundation of China under Grants No.41576101 and No.41427806, and National Key Research and Development Program of China under Grant No.2016YFC0300801.}		
\thanks{K. You and S. Song are with the Department of Automation and BNRist,, Tsinghua University,  Beijing 100084, China. Email: \{youky, shijis\}@tsinghua.edu.cn.}
\thanks{ Q. Chen is with Beijing Science and Technology Co, three fast online, China. Email: chenqizhu@meituan.com.} 
\thanks{P. Xie is with the  JD.COM, China. Email:  xiepei13@jd.com.}}
\maketitle

\begin{abstract}
This paper studies a coordinate alignment problem for cooperative mobile sensor network localization with range-based measurements. The network consists of target nodes, each of which has only access position information in a {\em local} fixed coordinate frame, and anchor nodes with GPS position information. To localize target nodes, we aim to align their coordinate frames, which leads to a non-convex optimization problem over a rotation group $\text{SO}(3)$.  Then, we reformulate it as an optimization problem with a convex objective function over spherical surfaces. We explicitly design both iterative and recursive algorithms for localizing a target node with an anchor node, and extend to the case with multiple target nodes. Finally, the advantages of our algorithms against the literature are validated via simulations.   

\end{abstract}

\begin{IEEEkeywords}
Coordinate alignment, cooperative localization, mobile sensor networks, parallel projection.
\end{IEEEkeywords}

%%%%%%%%%%%%%%%%%%%%%%%%%%%%%%%%%%%%\begin{comment}

\section {Introduction}\label{Introduction}
{{\em Cooperative}  localization is an important positioning technology \cite{patwari2005locating, wymeersch2009cooperative, kia2015cooperative,buehrer2018collaborative}.  
In the past decades, there are many methods for cooperative localization, such as semidefinite programming (SDP) \cite{wang2006further}, second-order cone programming \cite{tseng2007second}, sum of squares \cite{nie2009sum}, multidimensional scaling (MDS) \cite{shang2004localization}, convex relaxation \cite{soares2015simple} and parallel projection algorithms (PPA) \cite{gholami2013cooperative,jia2011set}. Among them, PPA is reported to yield comparable accuracy to SDP and MDS with much shorter running time \cite{jia2011set}, and is an attractive localization approach.}

By using both target-anchor  and target-target range measurements, this work is concerned with cooperative localization problems over mobile sensor networks where anchor nodes are encoded with GPS positions and each target node is only aware of its position information in a local fixed coordinate frame, whose orientation and position relative to the global frame of the GPS are unknown. This framework is of great importance in both the underwater \cite{chen2017cooperative} and aerial localization \cite{BOMIN17}.  
For example, in case of multiple autonomous underwater vehicles (AUVs) the GPS information is often available to a very limited number of AUVs. Then, it is sensible to use cooperative methods to localize other AUVs with the inter-AUV range measurements  \cite{bahr2009cooperative,papadopoulos2010cooperative,webster2012advances,wang2014optimization}. 
For unmanned aerial vehicles (UAVs), the target UAV in \cite{BOMIN17} is assumed to access to the Inertial Navigation System (INS), but the INS may continuously drift after initiation and lose the connection with the global coordinate system. That is, the GPS position of the target UAV is unavailable and requires to use inter-UAV range measurements for localization. 
 
 If a series of consistent positions in a local fixed coordinate frame can be obtained for a target node, its GPS position can be localized by aligning its local frame with the global frame of the GPS by using target-anchor measurements.  To this end, a natural way is to parameterize the local coordinate frame by a rotation matrix $R\in\text{SO}(3)$ and a translation vector $T\in\bR^3$. Then, the alignment problem reduces to the estimation of $(R, T)$, which is the key idea of \cite{chen2017cooperative,BOMIN17} and is also closely related to the idea of estimating the deviation of the local coordinate from the global coordinate in \cite{bahr2009cooperative,allotta2016development,huang2018new}.

{This work starts from investigating the problem of localizing a target node with an anchor node. The least squares estimate of $(R, T)$ can be obtained by solving an optimization problem with a non-convex objective function and non-convex constraints. Such a non-convex optimization problem in \cite{BOMIN17} is firstly relaxed as a SDP problem with $11$ equality constraints and the decision vector is a $17\times 17$ positive semi-definite matrix, hoping that the solution to the SDP problem can provide a good suboptimal solution.  To further refine the SDP solution, they design a gradient descent algorithm over the rotation group $\text{SO}(3)$. Differently from \cite{BOMIN17}, we exploit the geometric relations between nodes and reformulate the non-convex problem as a well-structured optimization problem with a convex cost over spherical constraints. This idea was presented in our conference paper \cite{chen2017cooperative}, the major results of which are all contained in Section III(A)-(B) of this work.}

{The striking feature of our approach is that we are able to simultaneously solve the coordinate alignment problem for multiple target nodes in a general sensor network by using both target-anchor  and target-target range measurements. Note that the authors in \cite{BOMIN17} only consider the case with only a target node, and is unclear how to extend to the general case with multiple target nodes. }

{With the aid of the block coordinate descent method \cite{bertsekas1999nonlinear}, we propose a parallel projection algorithm (PPA) to solve the above reformulated problem. The projection is with respect to the spherical surfaces and can be explicitly written in a simple form, after which the constraint $R\in\text{SO}(3)$ can also be easily resolved. Overall, the iteration of the PPA is given in a simple form and can be implemented with a low computational cost, which is important to the sensor network. In comparison with \cite{BOMIN17},  the PPA requires a much lower computational cost with comparable localization accuracy, both of which have been validated via numerical experiments.  }

{Interestingly, the PPA can easily incorporate new measurements  to update our estimate of $(R, T)$. Specifically,  we propose a recursive version of the PPA, which is termed as recursive projection algorithm (RPA), to approximately solve the optimization problem for coordinate alignment.  More importantly, we are able to extend our method to the case of multiple target nodes in a mobile sensor network. For a time-varying network, we further use the block coordinate descent idea to design the PPA to reduce the computational load. For a time-invariant network, we jointly use the Jacobi iterative method to run the PPA and obtain a distributed PPA, which only requires each target node to exchange information with its neighboring target nodes.  } 

%The main contributions of this paper are summarized as follows.
%\begin{itemize}
%	\item We provide a new formulation of simultaneously localizing multiple target nodes in a mobile sensor network by adopting the coordinate alignment.  
%	\item For the two-node localization problem, we propose a novel PPA to solve the coordinate alignment problem, whose advantages over the state-of-the-art works are validated via simulations. Moreover, we propose an online RPA, which seems impossible for the algorithm in \cite{BOMIN17}.
%	
%	\item A distributed PPA is proposed to localize multiple target nodes with a fixed communication topology, which is scalable to the network size and thus particularly useful for a large scale network.
%\end{itemize}

The rest of this paper is organized as follows. In Section \ref{Problem Statement}, we formulate the coordinate alignment over a time-varying network as a non-convex optimization. In Section \ref{twonode}, focusing on two-node coordinate alignment problem, we propose the PPA and RPA. In Section \ref{multinode}, we extend them to the multi-node setting and propose a PPA algorithm by using the block coordinate descent idea. For a fixed communication graph,  a distributed method with the Jacobi iteration is designed. The numerical experiments are conducted in Section \ref{Numerical Experiments}. Finally, some concluding remarks are drawn in Section \ref{Conclusion}.

\section{Problem Statement}\label{Problem Statement}
\subsection{The mobile sensor network}

The mobile sensor network is represented by a sequence of time-varying graphs $\mathcal{G}(t)=(\mathcal{V}, \mathcal{E}(t))$ where $\mathcal{V}$ is the set of a fixed number of mobile nodes and $\mathcal{E}(t)$ is the set of edges between nodes at discrete time $t\in\bN$. Specifically, $\mathcal{V}$ is the union of a target set $\mathcal{T}=\{1,\ldots,n\}$ and an anchor set $\mathcal{A}=\{n+1,\ldots,n+r\}$ where an anchor node can access its position information in the GPS  while a target node does not and is only aware of its position information in a local fixed coordinate frame whose orientation and position relative to the global frame of the GPS are unknown. See an example of  collaborative UAVs  in Section \ref{Introduction}. For brevity, the former is called the {\em global} position and the later is called {\em local} position.  Our objective is to localize the global positions of target nodes under information flow constraints, which are modeled by the graph $\mathcal{G}(t)$. 

Specifically, {for target node $i$ and anchor node $a$, $(i,a)\in\cE(t)$ identifies the communication from $a$ to $i$}. For any pair of target nodes $i$ and $j$ such that $(i,j)\in\cE(t)$, then $(j,i)\in\cE(t)$ and both nodes can communicate with each other. Moreover, two noisy range measurements $d_{ij}(t)$ and $d_{ji}(t)$ are taken by node $i$ and node $j$, respectively.  Note that $d_{ij}(t)$ and $d_{ji}(t)$ may not be equal due to the use of different range sensors. While for a target node $i$ and an anchor node $a$ such that $(i,a)\in\cE(t)$, only the range measurement $r_{ia}(t)$ is available to node $i$ and $(a,i)\notin \cE(t)$.   A target node $i$ is said to be connected to an anchor node $a$ in $\cG(t)$ if there is a path of consecutive edges in $\cE(t)$ that connects the two nodes. Given a target node $i$, let $\mathcal{T}_i(t)$ be the set of its neighboring target nodes, i.e., $\mathcal{T}_i(t) = \left\{j|j \in \mathcal{T}, (i,j)\in\cE(t)\right\}$ and $\mathcal{A}_i(t)$ is the set of neighboring anchor nodes, i.e., $\mathcal{A}_i(t) = \left\{a|a \in \mathcal{A}, (i,a)\in\cE(t)\right\}$.  Thus, the set of range measurements available to the target node $i$ at time $t$ is given as 
\bee\label{eqmeasure}
{m}_i(t)=\{d_{ij}(t),r_{ia}(t)|j\in\cT_i(t),a\in\cA_i(t)\}.
\ene

\subsection{Coordinate alignment for cooperative localization}
{Let $p_a^g(t) \in \mathbb{R}^3$ be the global position of an anchor node $a$ and $p_i^l(t) \in \mathbb{R}^3$ be the local position of a target node $i$ at time $t$, whose local coordinate system is parameterized by a rotation matrix $R_i^*\in \text{SO}(3)$ which is defined as 
$$\text{SO}(3)= \{R \in \mathbb{R}^{3 \times 3}| RR' = I, \text{det}(R) =1\}$$ and a translation vector $T_i^* \in \mathbb{R}^3$. Clearly,  the global position of the target node $i$ is expressed as
$
R_i^* p_i^l(t)+T_i^*. 
$
To localize the target node $i$, we aim to compute its coordinate parameters $(R_i^*,T_i^*)$ with noisy range measurements up to time $\bar{t}$, i.e.,
\begin{equation*}
\begin{aligned}
d_{ij}(t)&=\lVert R_i^*p_i^l(t) + T_i^* - R_j^*p_j^l(t) - T_j^* \rVert + \xi_{ij}(t),\\
r_{ia}(t)&=\lVert R_i^*p_i^l(t) + T_i^* -  p_a^g(t) \rVert + \xi_{ia}(t), t\in[1:\bar{t}],
\end{aligned}
\end{equation*}
where $e=(i,j)$ or $e=(i,a)$ indicates an edge and $\{\xi_{e}(t)\}_{t=1}^{\overline{t}}$ is a sequence of temporally uncorrelated with zero mean and the sequence $\{\xi_e(t)\}_{e\in\cE(t)}$ is spatially uncorrelated at any time $t$, and $[1:\bar{t}]=\{1,\ldots,\bar{t}\}$. Given a pair of  $R=[R_1',\ldots,R_n']'$ and $T=[T_1',\ldots,T_n']'$, the least squares estimate uses the quadratic loss function
\begin{equation}\label{eq:sub_objective_function}
\begin{split}
f_{ij}^{\mathcal{T}}(t,R,T) &= (d_{ij}(t)-\lVert R_ip_i^l(t) + T_i -  R_jp_j^l(t) - T_j \rVert)^2,\\
f_{ia}^{\mathcal{A}}(t,R,T) &= (r_{ia}(t)-\lVert R_ip_i^l(t) + T_i -  p_a^g(t) \rVert)^2.
\end{split}
\end{equation}}

Our coordinate alignment  problem for cooperative localization is formulated as a constrained optimization problem
\begin{equation}\label{eq:mle}
\begin{split}
&\underset{R_i, T_i}{\text{minimize}}~~ f(R,T):=\sum_{i=1}^n f_i(R, T)\\ 
& \text{\,subject to} \quad R_i \in \text{SO}(3) ,\ T_i \in \mathbb{R}^3, \forall i\in\cT
\end{split}
\end{equation}
where each summand in the objective function is given by
\begin{equation}\label{eq:objective_function}
f_i(R, T)= \sum_{t=1}^{\overline{t}}\big(
	                 \sum_{j\in\cT_i(t)}{f_{ij}^{\mathcal{T}}(t,R,T)}+ \sum_{a\in\cA_i(t)}{f_{ia}^{\mathcal{A}}(t,R,T)}\big).
\end{equation}

Under mild conditions, we show that the constrained optimization problem in \eqref{eq:mle} is solvable. 
\begin{prop} \label{prop_solvable}If each target node is connected to an anchor node in the union graph $\bigcup_{t=1}^{\overline{t}}\cG(t)$, then the constrained optimization problem in \eqref{eq:mle} contains at least an optimal solution.
\end{prop}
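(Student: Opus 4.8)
The plan is to establish existence through Weierstrass's extreme value theorem: since the objective $f$ is continuous on the feasible region, it suffices to produce a nonempty compact subset of that region on which the infimum is attained. The rotational part $(\text{SO}(3))^n$ is already compact, being a closed and bounded subset of $(\bR^{3\times 3})^n$ (the defining relations $R_iR_i'=I$, $\det(R_i)=1$ are closed conditions, and orthonormality bounds the entries). The only genuine difficulty is that each translation $T_i$ ranges over the unbounded space $\bR^3$, so the whole argument reduces to showing that a minimizing sequence of translations stays bounded, which is precisely where the connectivity hypothesis enters.

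Concretely, first I would note that $f$ is continuous and nonnegative, so $f^\star:=\inf f$ is finite, and then pick a minimizing sequence $(R^{(m)},T^{(m)})$ with $f(R^{(m)},T^{(m)})\le f^\star+1=:C$. Because every summand of $f$ is nonnegative, each term is bounded by $C$ along the sequence. For a target--anchor edge $(i,a)\in\cE(t)$, the bound on $f_{ia}^{\mathcal{A}}$ gives $\lVert R_i^{(m)}p_i^l(t)+T_i^{(m)}-p_a^g(t)\rVert\le r_{ia}(t)+\sqrt{C}$; since rotations are norm-preserving and $p_i^l(t),p_a^g(t)$ are fixed, this forces $\lVert T_i^{(m)}\rVert$ to be bounded uniformly in $m$. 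Similarly, for a target--target edge $(i,j)\in\cE(t)$, the bound on $f_{ij}^{\mathcal{T}}$ combined with the triangle inequality forces the difference $\lVert T_i^{(m)}-T_j^{(m)}\rVert$ to be bounded.

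The connectivity assumption then propagates these bounds to every node. For any target node $i$, take a path $i=i_0,i_1,\ldots,i_{k-1},i_k=a$ in the union graph $\bigcup_{t}\cG(t)$ ending at an anchor $a$: the final edge bounds $\lVert T_{i_{k-1}}^{(m)}\rVert$, each intermediate target--target edge bounds the consecutive difference $\lVert T_{i_j}^{(m)}-T_{i_{j+1}}^{(m)}\rVert$, and a telescoping triangle inequality yields a uniform bound on $\lVert T_{i}^{(m)}\rVert$. Carrying this out for all $i$ shows that $\{T^{(m)}\}$ lies in a bounded set. I would then extract a subsequence along which both $R^{(m)}\to R^\star$ (by compactness of $(\text{SO}(3))^n$) and $T^{(m)}\to T^\star$; the limit is feasible because $\text{SO}(3)$ and $\bR^3$ are closed, and continuity of $f$ gives $f(R^\star,T^\star)=f^\star$, so $(R^\star,T^\star)$ is an optimal solution.

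I expect the main obstacle to be the bookkeeping in the telescoping step: one must verify that each edge in the chosen path is realized at some (possibly different) time $t$ in the union graph, and that the resulting per-edge bounds, which depend on the measurements $r_{ia}(t)$, $d_{ij}(t)$ and the local positions $p^l(t)$ at those specific times, are genuinely finite constants independent of $m$. Once it is clear that only finitely many edges and times are involved and that the rotations contribute only bounded, norm-preserving terms, the coercivity of $f$ in $T$ follows cleanly and the existence conclusion is immediate.
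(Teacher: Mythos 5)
Your proposal is correct and takes essentially the same route as the paper: the paper also reduces existence to coercivity of $f$ in the translations, proved by propagating blow-up along a path from each target node to an anchor node (exactly your telescoping step), and then combines this with compactness of $\text{SO}(3)$, continuity of $f$, and Weierstrass' theorem. The only difference is packaging — you bound a minimizing sequence and extract a convergent subsequence, while the paper states the coercivity limit $\lim_{\lVert T\rVert \to \infty} f(R,T)=\infty$ and invokes the coercive version of Weierstrass' theorem directly.
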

\begin{proof}See Appendix \ref{appendixa}. \end{proof}

 In the sequel, we shall design explicit algorithms to solve the optimization problem in \eqref{eq:mle} by using projection technique. 
\section{Localizing a Target Node with an Anchor Node}\label{twonode}

\begin{figure}
		\centering
		\includegraphics[width=2in]{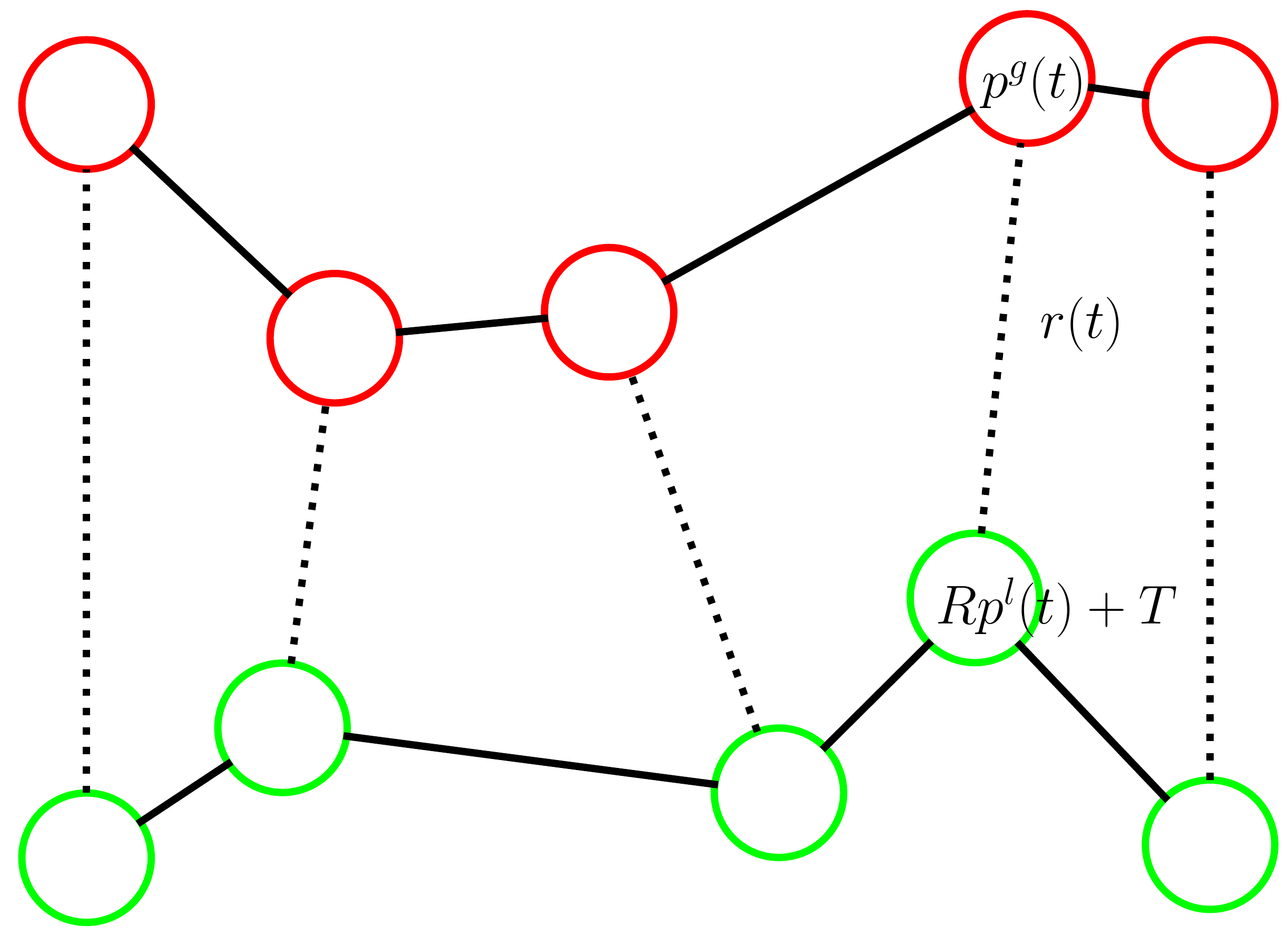}
	\caption{Two mobile AUVs with range measurements. One is a CNA and the other is a GPS-denied AUV. }
	\label{fig:two_moving_auv}
\end{figure}

In this section, we consider the problem of localizing only a mobile target node with an anchor node. This is well motivated by localizing a GPS-denied AUV. Another AUV with known global position is deployed to serve as a communication and navigation aid (CNA) \cite{huang2018new,bahr2009cooperative}.  They cooperatively work in the underwater and communicate with each other to obtain a series of range measurements, see Fig. \ref{fig:two_moving_auv}. {In this case, the minimum number of range measurements is $7$ \cite{yu2006principles}.}

To simplify notations of this section, let $p^g(t) \in \mathbb{R}^{3}$ be the global position of the anchor node, $p^l(t) \in \mathbb{R}^{3}$ be the local position of the GPS-denied target node and $r(t) \in \mathbb{R}$ be the range measurement between the two nodes at time $t$.  {Then, the information set for the target localization performed in the time interval $\bar{t}$} is collectively given by 
\bee\label{information2}
\cI(\overline{t})=\bigcup\nolimits_{t=1}^{\overline{t}} \{p^l(t),p^g(t),r(t)\},
\ene
and the optimization problem in (\ref{eq:mle}) is reduced as
\begin{eqnarray}\label{eq:mle1}
&&\underset{R,T}{\text{minimize}}~\sum\nolimits_{t=1}^{\overline{t}} f_t(R,T)  \nonumber \\ 
&& \text{subject to} \quad R \in \text{SO}(3), T\in\bR^3
\end{eqnarray}
where the summand in the objective function is
\begin{gather}\label{eq:t_th_summand}
f_t(R,T) = (r(t)-\lVert Rp^l(t) + T - p^g(t) \rVert)^2.
\end{gather} 
\subsection{Optimization problem reformulation using projection}
To solve the  optimization problem in (\ref{eq:mle1}), there are at least two challenges. The first is that $f_t(R,T)$ is non-convex, which usually is approximately solved by the convex relaxation  \cite{gholami2013cooperative,soares2015simple,BOMIN17,naseri2017cooperative}.  Here  we solve it by expressing as the minimization of a convex function over a spherical surface.  The second lies in the constraint set of a rotation group $\text{SO}(3)$, which fortunately can be explicitly solved as well. 

One can show that $f_t(R,T)$ is the squared range between the point $Rp^l(t) + T$ and the spherical surface centered at $p^g(t)$ with a radius $r(t)$\cite{soares2015simple}, see Fig. \ref{fig:projection}. That is,
\begin{equation}\label{eq:t_th_summand_inf_y}
\begin{aligned}
f_t(R,T) = \underset{y \in \cS(t)}{\min} {\lVert Rp^l(t) + T - y\rVert^2},
\end{aligned}
\end{equation}
where $\cS(t)$ is a spherical surface, i.e., 
\begin{equation}\label{eq:spherical_surfaces}
\cS(t)=\left\{y\in\bR^3|~\lVert y- p^g(t)\rVert=r(t)\right\}.
\end{equation}
\begin{figure}[t!]
		\centering
		\includegraphics[width=2in]{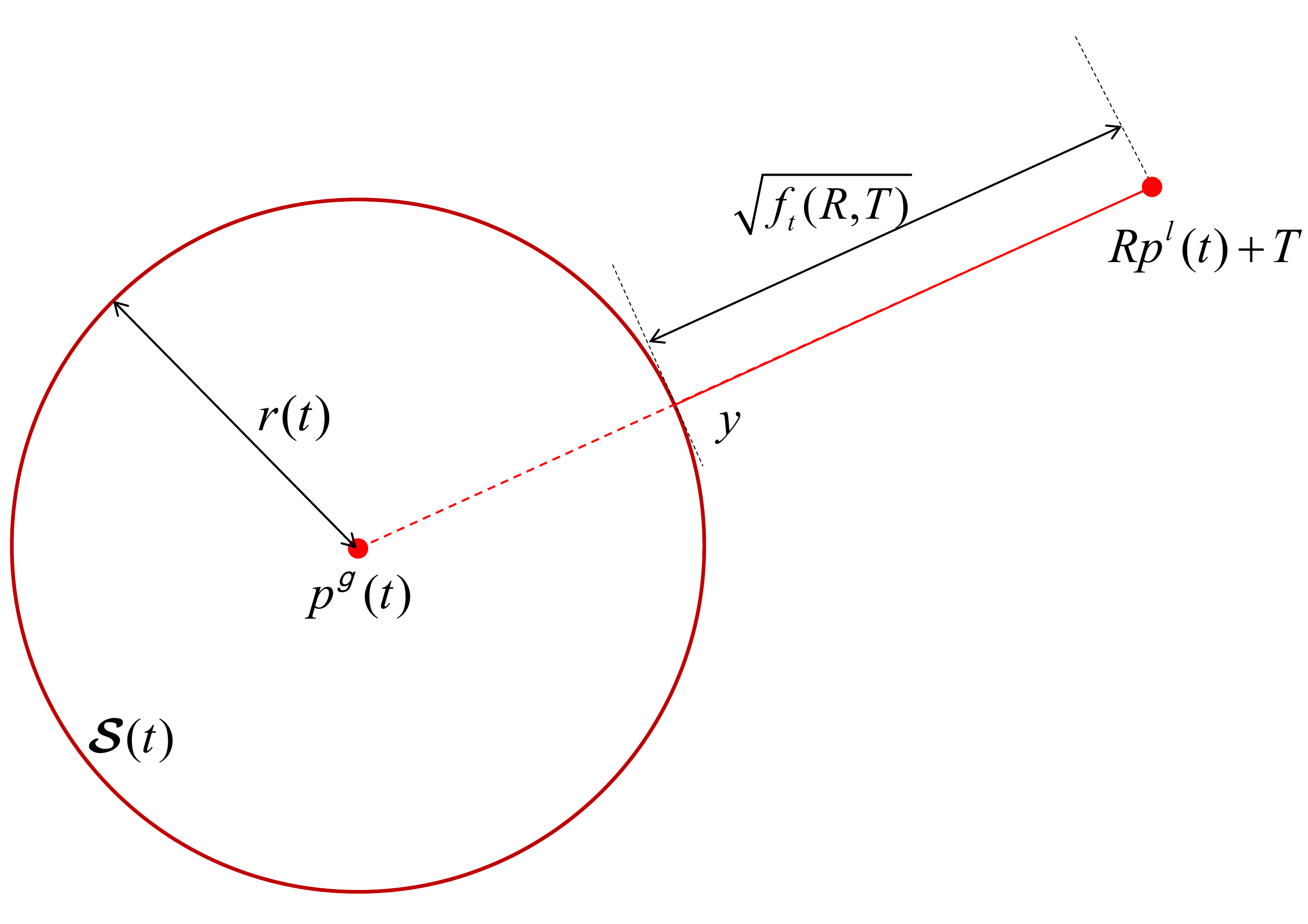}
	\caption{Projection onto a spherical surface if $r(t)<\lVert Rp^l(t) + T - p^g(t) \rVert$.}
	\label{fig:projection}
\end{figure}
In view of (\ref{eq:t_th_summand_inf_y}), we obtain the following optimization problem
\begin{equation}\label{eq:remle}
\begin{split}
& \underset{R,T,y_{1:\overline{t}}}{\text{minimize}} ~\sum\nolimits_{t=1}^{\overline{t}}{\lVert Rp^l(t) + T - y(t) \rVert^2} \\
& \text{subject to} \quad R \in \text{SO}(3), T \in \mathbb{R}^{3}, y(t) \in \cS(t), t\in[1:\overline{t}].
\end{split}
\end{equation}
\begin{remark}{Note that target localization is not instantaneous but performed in time interval $\bar{t}$.} In \cite{soares2015simple},  the so-called disk relaxation is adopted by relaxing the spherical surface $\cS(t)$ into a closed ball $\cB(t)=\{y|\lVert y- p^g(t)\rVert\le r(t)\}$. This leads to an underestimated convex problem, and is useless here as $\text{SO}(3)$ is not convex.
\end{remark}

Clearly, the two optimization problems in \eqref{eq:mle1} and \eqref{eq:remle} are essentially equivalent in the sense that both achieve the same minimum value and the same optimal set of $(R, T) $.  The good news is that the optimization problem \eqref{eq:remle} has favorable properties. First, its objective function is quadratically convex. Second, the newly introduced sets $\cS(t), t\in[1:\overline{t}]$ are spherical surfaces which are not difficult to compute the associated Euclidean projection. In fact, given a vector $y \in \mathbb{R}^3$, its Euclidean projection onto a spherical surface $\cS(t),t\in[1:\overline{t}]$ is explicitly expressed as
\begin{equation}\label{eq:projection_onto_ct}
P_{\cS(t)}(y) =p^g(t) + \frac{r(t)}{\lVert y-p^g(t) \rVert} (y-p^g(t))~\text{if}~y\notin \cS(t).
\end{equation}

To be specific, the projection of any matrix $\Omega \in \mathbb{R}^{3 \times 3}$ onto $\text{SO}(3)$ is obtained by solving a constrained optimization problem, i.e.,
\begin{equation*}
\begin{split}
 P_{\text{SO}(3)}(\Omega)& = \underset{R \in \text{SO}(3)}{\argmin} \lVert R-\Omega \rVert^2_F
 \end{split}
\end{equation*}
where $\twon{\cdot}_F$ denotes the Frobenius norm. In view of \cite{umeyama1991least},  $P_{\text{SO}(3)}(\Omega)$ is explicitly given as
\begin{equation}\label{eq:solution_of_projection_onto_so}
\begin{aligned}
P_{\text{SO}(3)}(\Omega)=UDV^{*},
\end{aligned}
\end{equation}
where $U$ and $V$ are obtained via the singular value decomposition of $\Omega$, i.e., $\Omega=U\Sigma V^*$, and 
\begin{equation*}
\begin{aligned}
D=
\begin{cases}
\text{diag}(1,1,+1),\text{ if } \text{det}(UV^{*})=1,\\
\text{diag}(1,1,-1),\text{ if } \text{det}(UV^{*})=-1.
\end{cases}
\end{aligned}
\end{equation*}

Next, we shall design algorithms to effectively solve the optimization problem (\ref{eq:remle}). 
\subsection{Parallel projection algorithm}
Once the target node has access the information set $\cI(\bar{t})$ in (\ref{information2}), it solves the optimization problem \eqref{eq:remle} by a block coordinate descent algorithm \cite{bertsekas1999nonlinear} with parallel projections. {We use master and worker to denote the order of updating per iteration}. Specifically, one master is used to update $(R^k,T^k)$ and $\overline{t}$-parallel workers are responsible for simultaneously updating $y^k(t), t\in[1:\overline{t}]$. The superscript $k$ denotes the number of iterations for solving the optimization problem \eqref{eq:remle}. 

At the $k$-th iteration, each worker receives the latest update $(R^k,T^k)$ from the master, and then performs the following projection in a parallel way
\begin{equation}\label{eq:y_parallel_projection}
\begin{aligned}
y^k(t) &= \argmin_{y(t)\in\cS(t)}\twon{R^kp^l(t)+T^k-y(t)} \\
&=P_{\cS(t)}(R^kp^l(t) + T^k),\quad t\in[1:\overline{t}],
\end{aligned}
\end{equation}
where $P_{\cS(t)}(\cdot)$ is given in (\ref{eq:projection_onto_ct}), and sends $y^k(t)$ to the master.  

Once the master receives $(y^{k}(1),\ldots,y^k(\overline{t}))$, it solves the following constrained least squares optimization 
\bee\label{master}
\underset{R\in\text{SO}(3),T}{\text{minimize}}\, \sum\nolimits_{t=1}^{\overline{t}}{\lVert Rp^l(t) + T - y^k(t) \rVert^2}.
\ene
\begin{prop}\label{prop_batch}
The optimization problem in (\ref{master}) is explicitly solved as 
\begin{equation}\label{eq:r_t_solution}
\begin{split}
R^{k+1} &= P_{\text{SO}(3)}(P^k),\\ 
T^{k+1} &= \overline{y}^k-R^{k+1}\overline{p}^l,
\end{split}
\end{equation}
where $P_{\text{SO}(3)}(\cdot)$ is given in (\ref{eq:solution_of_projection_onto_so}), $\overline{p}^l=\frac{1}{\overline{t}}\sum\nolimits_{t=1}^{\overline{t}}{p^l(t)}$ and $\overline{y}^k=\frac{1}{\overline{t}}\sum\nolimits_{t=1}^{\overline{t}}{y^k(t)}$ are ``mean" vectors of $\{p^l(t)\}$ and $\{y^{k}(t)\}$, and $P^k$ is their  ``correlation" matrix 
\bee\label{correlationmatri}
P^k=\sum\nolimits_{t=1}^{\overline{t}}{(y^k(t) - \overline{y}^k)(p^l(t) - \overline{p}^l)'}.
\ene
\end{prop}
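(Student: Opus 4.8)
The plan is to solve \eqref{master} by the standard two-stage strategy: first profile out the translation $T$ for each fixed rotation $R$, and then reduce the remaining minimization over $\text{SO}(3)$ to the orthogonal projection \eqref{eq:solution_of_projection_onto_so}. First I would fix $R\in\text{SO}(3)$ and minimize the objective over $T\in\bR^3$. The objective is a convex quadratic in $T$, so the first-order condition $\sum_{t=1}^{\overline{t}}(Rp^l(t)+T-y^k(t))=0$ is both necessary and sufficient, and solving it yields $T=\overline{y}^k-R\overline{p}^l$. Since this expression holds for every fixed $R$, substituting it back shows that the joint minimum in \eqref{master} equals the minimum over $R$ alone of the profiled objective, and it identifies $T^{k+1}=\overline{y}^k-R^{k+1}\overline{p}^l$ once the optimal $R^{k+1}$ is found.

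Next I would introduce the centered vectors $\widetilde{p}(t)=p^l(t)-\overline{p}^l$ and $\widetilde{y}^k(t)=y^k(t)-\overline{y}^k$. With $T=\overline{y}^k-R\overline{p}^l$ one checks directly that $Rp^l(t)+T-y^k(t)=R\widetilde{p}(t)-\widetilde{y}^k(t)$, so the profiled objective collapses to $\sum_{t=1}^{\overline{t}}\twon{R\widetilde{p}(t)-\widetilde{y}^k(t)}^2$. Expanding the square and using $\twon{R\widetilde{p}(t)}=\twon{\widetilde{p}(t)}$ (because $R'R=I$), the terms $\twon{R\widetilde{p}(t)}^2$ and $\twon{\widetilde{y}^k(t)}^2$ are constant in $R$, so minimizing the objective is equivalent to maximizing the cross term $\sum_{t=1}^{\overline{t}}\widetilde{y}^k(t)'R\,\widetilde{p}(t)$. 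Writing each scalar as its own trace and invoking the cyclic property, this sum equals $\tr(R(P^k)')=\tr(R'P^k)$, where $P^k=\sum_{t=1}^{\overline{t}}\widetilde{y}^k(t)\widetilde{p}(t)'$ is exactly the correlation matrix in \eqref{correlationmatri}.

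Finally, maximizing $\tr(R'P^k)$ over $\text{SO}(3)$ is precisely the problem defining $P_{\text{SO}(3)}(P^k)$: expanding $\twon{R-P^k}_F^2=\twon{R}_F^2-2\tr(R'P^k)+\twon{P^k}_F^2$ and noting that $\twon{R}_F^2=\tr(R'R)=3$ is constant shows that $\argmin_{R\in\text{SO}(3)}\twon{R-P^k}_F^2=\argmax_{R\in\text{SO}(3)}\tr(R'P^k)$. Hence $R^{k+1}=P_{\text{SO}(3)}(P^k)$ via \eqref{eq:solution_of_projection_onto_so}, and back-substitution yields $T^{k+1}=\overline{y}^k-R^{k+1}\overline{p}^l$, matching \eqref{eq:r_t_solution}.

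The step requiring the most care is the bookkeeping of transposes: the cross term naturally produces $\tr(R(P^k)')$, and one must apply both trace-transpose invariance and cyclicity to rewrite it as $\tr(R'P^k)$, so that the correlation matrix $P^k$ itself, rather than its transpose, is the argument of the projection. A swapped index in the definition of $P^k$ or a dropped transpose here would give the wrong (transposed) rotation, so this is the one place where a sign/orientation error could silently creep in; everything else is a routine convex-quadratic minimization followed by a direct appeal to the already-established projection formula \eqref{eq:solution_of_projection_onto_so}.
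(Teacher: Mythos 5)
Your proposal is correct and follows essentially the same route as the paper's proof: profile out $T$ for fixed $R$ to get $T=\overline{y}^k-R\overline{p}^l$, substitute back so the objective reduces (up to constants) to $-2\,\text{trace}(R'P^k)$, and then identify maximizing $\text{trace}(R'P^k)$ over $\text{SO}(3)$ with the Frobenius-norm projection $P_{\text{SO}(3)}(P^k)$. Your version is simply more explicit about the first-order condition in $T$ and the transpose/cyclicity bookkeeping in the trace identity, which the paper compresses into a single displayed computation.
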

\begin{proof}See Appendix \ref{appendixb}. \end{proof}
\begin{algorithm}[t!]  
	\caption{The Parallel Projection Algorithm (PPA) for Localizing a Target Node with an Anchor Node}
	\label{alg:ppa2} 
	\begin{enumerate}
	\renewcommand{\labelenumi}{\theenumi:}
	\item {\bf Input:} $\cI(\overline{t})$, which is the  information set for the target node, see (\ref{information2}).
\item {\bf Initialization:} The master arbitrarily selects $R^0\in\text{SO}(3)$ and $T^0\in\bR^3$, and sends to every worker $ t\in[1:\overline{t}]$.
\item {\bf Repeat}
\item {\bf Parallel projection:} Each local worker $t$ simultaneously computes 
$$y^k(t) =P_{\cS(t)}(R^kp^l(t) + T^k),\quad t\in[1:\overline{t}]$$
and sends $y^k(t)$ to the master. 

\item {\bf Master update:} The master computes the correlation matrix $P^k$ in (\ref{correlationmatri}) and uses (\ref{eq:solution_of_projection_onto_so}) to update as follows
 \begin{equation*}
\begin{split}
R^{k+1} &= P_{\text{SO}(3)}(P^k),\\ 
T^{k+1} &= \overline{y}^k-R^{k+1}\overline{p}^l.
\end{split}
\end{equation*}
\item {\bf Set} $k=k+1$.
\item {\bf Until} a predefined stopping rule (e.g., a maximum iteration number) is satisfied.
 \end{enumerate}
\end{algorithm}

 is interesting that \eqref{master} is closely related to the basic Procrustes problem \cite{Andersson97} and can be found in its full version in \cite{timmurphy}. For completeness, we also include a proof in Appendix. Finally, we summarize the above result in Algorithm \ref{alg:ppa2}.  
\begin{remark}Instead of using a SDP initialization \cite{BOMIN17}, we just randomly select a pair of $(R^0,T^0)$. Clearly, we can also adopt the same initialization to avoid getting into a bad local minimum.\qed \end{remark}

Since the optimization problem in (\ref{eq:remle}) is inherently non-convex, it cannot be guaranteed to converge to a global optimal solution.  However, it at least sequentially reduces the objective function per iteration, and achieves a better solution. To exposit  it, let $q:=(R,T,y_1,\ldots,y_{\bar{t}})$ and
$g(q):=\sum\nolimits_{t=1}^{\overline{t}}{\lVert Rp^l(t) + T - y(t) \rVert^2}$ be the decision variables and the objective function, respectively. We have the following result. 

\begin{prop} \label{prop_con}Let $\{q^k\}$ be iteratively computed in Algorithm \ref{alg:ppa2}. Then, it holds that 
$g(q^k)\le g(q^{k-1}), \forall k$ and there exists a convergent subsequence of $\{q^k\}$.
\end{prop}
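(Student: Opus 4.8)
The plan is to treat Algorithm~\ref{alg:ppa2} as a two-block Gauss--Seidel (block coordinate) minimization of $g$ over the product set $\text{SO}(3)\times\bR^3\times\cS(1)\times\cdots\times\cS(\overline{t})$, where one block is the pair $(R,T)$ and the other is the collection $y=(y(1),\ldots,y(\overline{t}))$. The structural fact I would exploit is that each of the two sub-updates is solved to \emph{global} optimality: the parallel-projection step returns, for fixed $(R^k,T^k)$, the exact minimizer of $g$ over $y(t)\in\cS(t)$ via the closed-form projection in \eqref{eq:projection_onto_ct}, while the master step returns, for fixed $y^{k-1}$, the exact minimizer over $(R,T)\in\text{SO}(3)\times\bR^3$ by Proposition~\ref{prop_batch}. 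With the bookkeeping $q^k=(R^k,T^k,y^k(1),\ldots,y^k(\overline{t}))$, in which $y^k$ is computed from $(R^k,T^k)$ and $(R^{k+1},T^{k+1})$ is in turn computed from $y^k$, both descent inequalities become available.

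For the monotonicity claim I would chain two inequalities. First, because $y^k(t)=P_{\cS(t)}(R^kp^l(t)+T^k)$ minimizes $\sum_t\lVert R^kp^l(t)+T^k-y(t)\rVert^2$ over $y(t)\in\cS(t)$ and $y^{k-1}(t)\in\cS(t)$ is feasible, one gets $g(q^k)\le\sum_{t}\lVert R^kp^l(t)+T^k-y^{k-1}(t)\rVert^2$. Second, because $(R^k,T^k)$ minimizes $\sum_t\lVert Rp^l(t)+T-y^{k-1}(t)\rVert^2$ over $\text{SO}(3)\times\bR^3$, the right-hand side is bounded by $\sum_t\lVert R^{k-1}p^l(t)+T^{k-1}-y^{k-1}(t)\rVert^2=g(q^{k-1})$. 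Concatenating yields $g(q^k)\le g(q^{k-1})$ for all $k$; since $g\ge 0$, the scalar sequence $\{g(q^k)\}$ is in fact convergent as a free byproduct.

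For the convergent-subsequence claim I would show that $\{q^k\}$ lies in a compact set and invoke Bolzano--Weierstrass. Two of the blocks are automatically confined to compact sets: $R^k\in\text{SO}(3)$, which is closed and bounded, and $y^k(t)\in\cS(t)$, a sphere of radius $r(t)$, so that $\lVert y^k(t)\rVert\le\lVert p^g(t)\rVert+r(t)$. The only block ranging over a non-compact set is $T$, so the heart of this step---and essentially the only place real work is needed---is bounding $\{T^k\}$. Here I would use the explicit formula $T^k=\overline{y}^{k-1}-R^k\overline{p}^l$ from Proposition~\ref{prop_batch}: $\overline{y}^{k-1}$ is an average of the bounded vectors $y^{k-1}(t)$ and hence bounded, while $\lVert R^k\overline{p}^l\rVert=\lVert\overline{p}^l\rVert$ because $R^k$ is orthogonal, giving a uniform bound on $\lVert T^k\rVert$. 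Equivalently, one could observe that $g$ restricted to $\text{SO}(3)\times\bR^3\times\prod_t\cS(t)$ has compact sublevel sets and that $g(q^k)\le g(q^0)$. Either way, $\{q^k\}$ is contained in a compact set and therefore admits a convergent subsequence. I would stress, consistent with the non-convexity of \eqref{eq:remle}, that this establishes only subsequential convergence and asserts nothing about global optimality of the limit.
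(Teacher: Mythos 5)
Your proposal is correct and follows essentially the same route as the paper's proof: the same two chained descent inequalities (exact minimization over $y(t)\in\cS(t)$ for fixed $(R^k,T^k)$, then exact minimization over $\text{SO}(3)\times\bR^3$ for fixed $y^{k-1}$ via Proposition~\ref{prop_batch}), followed by compactness of $\text{SO}(3)$ and the spheres plus the formula $T^k=\overline{y}^{k-1}-R^k\overline{p}^l$ to bound the sequence. The only difference is that you spell out the bound on $\{T^k\}$ explicitly (using $\lVert R^k\overline{p}^l\rVert=\lVert\overline{p}^l\rVert$), whereas the paper simply asserts boundedness by citing \eqref{eq:r_t_solution}; this is a welcome clarification, not a different argument.
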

\begin{proof}See Appendix \ref{appendixc}. \end{proof}
\begin{remark} In \cite{BOMIN17}, a semidefinite programming (SDP) relaxation is firstly devised to find an initial estimate of $(R,T)$, which involves solving a SDP with $11$ equality constraints and the decision vector is a $17\times 17$ positive semi-definite matrix. Then, they solve the optimization problem in (\ref{eq:mle1}) by using the projection of the gradient $M:=\frac{\partial}{\partial R}\sum\nolimits_{t=1}^{\overline{t}} f_t(R,T)  $ onto the tangent space of $\text{SO}(3)$, which is explicitly given as 
$$M_T(R)=\frac{1}{2}(M-RM'R).$$

The discretized version is essentially gradient descent (GD) and given by 
$R^{k+1}=P_{\text{SO}(3)}\left(R^k-\alpha^k M_T(R^k)\right),$ where $\alpha^k$ is a stepsize. Notably, they also explicitly state (without proof) that the SDP relaxation is important in providing a good initialization. Solving such a SDP and extracting a feasible $R^0\in \text{SO}(3)$ from the SDP's solution inevitably increases the computation cost.  {Though Algorithm \ref{alg:ppa2}  is only randomly initialized, numerical results show that its localization accuracy is still comparable to that of  \cite{BOMIN17}. }

More importantly, the focus of the equivalent optimization problem in (\ref{eq:remle}) allows us easily to devise a recursive algorithm to estimate $(R,T)$ in an online way (c.f. Section \ref{sec:recursive}) and generalize to the case of generic mobile sensor networks (c.f. Section \ref{multinode}). It is worthy mentioning that the approach in \cite{BOMIN17} currently only applies to a star topology.
\qed
\end{remark}

\subsection{The recursive projection algorithm}
\label{sec:recursive}
While Algorithm \ref{alg:ppa2} produces good results if $\overline{t}$ is moderately large, it does not exploit the sequential collection of the measurement, and the number of local intermediate variables $y(t)$ increases linearly with the number of range measurements. To resolve it,
 this subsection presents an {approximate} Recursive Projection Algorithm (RPA) which only performs one iteration whenever new measurement arrives.  

At time $t$, suppose we have already obtained a prior estimate $(R(t-1), T(t-1))$ and collected a new measurement  $\{p^l(t),p^g(t),r(t)\}$. Using this information, we shall recursively update the estimate of $(R,T)$ in an online  way. 

Similar to (\ref{eq:y_parallel_projection}), we perform an online projection 
\bee\label{projonline}
y(t)=P_{\cS(t)}(R(t-1)p^l(t) + T(t-1)),
\ene
where $\cS(t)$ is defined in (\ref{eq:spherical_surfaces}). In comparison with (\ref{master}),  the projection operation for $y(t)$ is only performed once.  Then,  the new estimate of $(R,T)$ is set as follows
\bee\label{online}
(R(t),T(t))=\underset{R\in\text{SO}(3),T}{\argmin}\, \sum\limits_{k=1}^{t}{\lVert Rp^l(k) + T - y(k) \rVert^2},
\ene
which can be recursively computed. 
\begin{prop}\label{prof_rpa}Let $\bar{y}(t), \bar{p}^l(t)$ and $P(t)$ be recursively computed by
\begin{equation}\label{ypl}
\begin{split}
\bar{y}(t)&= \bar{y}(t-1)+\frac{1}{t}(y(t)-\bar{y}(t-1)),\\ 
\bar{p}^l(t)&= \bar{p}^l(t-1)+\frac{1}{t}(p^l(t)-\bar{p}^l(t-1)),\\
P(t)&=P(t-1)+(1-\frac{1}{t})\\ 
&~~~~\times(y(t)-\bar{y}(t-1))(p^l(t)-\bar{p}^l(t-1))'
\end{split}
\end{equation}
where $\bar{y}(0)=\bar{p}^l(0)=0\in\bR^{3}$ and $P(0)=0\in\bR^{3\times 3}$. Then, the optimization problem in (\ref{online}) is solved by
\begin{equation}\label{onlinesolution}
\begin{split}
R(t)&= P_{\text{SO}(3)}(P(t)),\\ 
T(t) &=\bar{y}(t) -R(t) \bar{p}^l(t).
\end{split}
\end{equation}
\end{prop}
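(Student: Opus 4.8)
The plan is to show that the recursive formulas in \eqref{ypl} correctly reproduce the batch ``mean'' vectors and ``correlation'' matrix from Proposition \ref{prop_batch}, after which \eqref{onlinesolution} follows immediately by applying Proposition \ref{prop_batch} with $\overline{t}=t$. The key observation is that \eqref{online} is exactly the optimization problem \eqref{master} with the running index $k$ replaced by the current time $t$ and with $\{y(k)\}_{k=1}^t$ playing the role of $\{y^k(t)\}$. Thus, by Proposition \ref{prop_batch}, its solution is $R(t)=P_{\text{SO}(3)}(P(t))$ and $T(t)=\bar{y}(t)-R(t)\bar{p}^l(t)$, provided that $\bar{y}(t)$, $\bar{p}^l(t)$ and $P(t)$ are respectively the mean of $\{y(k)\}_{k=1}^t$, the mean of $\{p^l(k)\}_{k=1}^t$, and the correlation matrix $\sum_{k=1}^t (y(k)-\bar{y}(t))(p^l(k)-\bar{p}^l(t))'$. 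Therefore the entire content of the proof reduces to verifying these three recursions by induction on $t$.

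First I would establish the two mean recursions, which are standard. Writing $\bar{y}(t)=\frac{1}{t}\sum_{k=1}^t y(k)$, I would split off the last term and substitute the inductive hypothesis $\bar{y}(t-1)=\frac{1}{t-1}\sum_{k=1}^{t-1}y(k)$ to obtain $\bar{y}(t)=\frac{t-1}{t}\bar{y}(t-1)+\frac{1}{t}y(t)=\bar{y}(t-1)+\frac{1}{t}(y(t)-\bar{y}(t-1))$, matching the first line of \eqref{ypl}; the recursion for $\bar{p}^l(t)$ is identical. The base case is handled by the stated initialization $\bar{y}(0)=\bar{p}^l(0)=0$.

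The main work, and the step I expect to be the only non-routine one, is the recursion for the correlation matrix $P(t)$. The subtlety is that the batch definition $P(t)=\sum_{k=1}^t (y(k)-\bar{y}(t))(p^l(k)-\bar{p}^l(t))'$ centers every summand at the \emph{current} means $\bar{y}(t),\bar{p}^l(t)$, whereas the recursion updates using the \emph{previous} means $\bar{y}(t-1),\bar{p}^l(t-1)$; the apparent discrepancy must cancel. My plan is to use the algebraic identity that, for any vectors, $\sum_{k=1}^t (y(k)-\bar{y}(t))(p^l(k)-\bar{p}^l(t))' = \sum_{k=1}^t y(k)(p^l(k))' - t\,\bar{y}(t)(\bar{p}^l(t))'$ (the standard ``uncentered minus mean-outer-product'' form). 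Applying this at both $t$ and $t-1$ and subtracting, I would reduce $P(t)-P(t-1)$ to $y(t)(p^l(t))' + (t-1)\bar{y}(t-1)(\bar{p}^l(t-1))' - t\,\bar{y}(t)(\bar{p}^l(t))'$. I would then substitute the already-proven mean recursions $\bar{y}(t)=\bar{y}(t-1)+\frac{1}{t}(y(t)-\bar{y}(t-1))$ (and likewise for $\bar{p}^l$) into the $t\,\bar{y}(t)(\bar{p}^l(t))'$ term and expand. After collecting terms, the cross and quadratic contributions in $\bar{y}(t-1),\bar{p}^l(t-1)$ should combine so that the increment equals $(1-\frac{1}{t})(y(t)-\bar{y}(t-1))(p^l(t)-\bar{p}^l(t-1))'$, which is precisely the third line of \eqref{ypl}. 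The base case $P(0)=0$ closes the induction. Once all three recursions are verified, invoking Proposition \ref{prop_batch} yields \eqref{onlinesolution} and completes the proof.
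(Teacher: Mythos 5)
Your proposal is correct and follows essentially the same route as the paper: both reduce the statement to Proposition \ref{prop_batch} (applied with $\overline{t}=t$) by verifying that the recursions in \eqref{ypl} reproduce the batch means and the batch correlation matrix $\sum_{k=1}^t (y(k)-\bar{y}(t))(p^l(k)-\bar{p}^l(t))'$. The only difference is bookkeeping in the $P(t)$ step — you subtract the uncentered-sum identity at times $t$ and $t-1$ and then substitute the mean recursions, while the paper directly expands the centered sum around the previous means $\bar{y}(t-1),\bar{p}^l(t-1)$; both computations yield the same increment $(1-\tfrac{1}{t})(y(t)-\bar{y}(t-1))(p^l(t)-\bar{p}^l(t-1))'$.
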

\begin{proof}See Appendix \ref{appendixd}. \end{proof}
\begin{algorithm}[t!]  
	\caption{The Recursive Projection Algorithm (RPA) for Localizing a Target Node with an Anchor Node}
	\label{alg:opa2} 
	\begin{enumerate}
	\renewcommand{\labelenumi}{\theenumi:}
\item {\bf Initialization:} The target node randomly selects  $R(0)\in\text{SO}(3)$ and $T(0)\in\bR^3$, and chooses $\bar{y}(0)=\bar{p}^l(0)=0\in\bR^{3},P(0)=0\in\bR^{3\times 3}$.
\item {\bf Online projection:} At time $t$, the target node receives a triple $\{p^l(t),p^g(t),r(t)\}$ and performs an online projection 
$$
y(t):=P_{\cS(t)}(R(t-1)p^l(t) + T(t-1)),
$$
where spherical surface $\cS(t)$ is defined in (\ref{eq:spherical_surfaces}).
\item {\bf Recursive update:} The target node recursively updates the triple $(\bar{y}(t),\bar{p}^l(t), P(t))$ by using (\ref{ypl}) and sets
\begin{equation*}
\begin{split}
R(t)&= P_{\text{SO}(3)}(P(t)),\\ 
T(t) &=\bar{y}(t) -R(t) \bar{p}^l(t).
\end{split}
\end{equation*}
\item {\bf Set} $t=t+1$.
 \end{enumerate}
\end{algorithm}

The recursive algorithm is summarized in Algorithm \ref{alg:opa2}. In practice, we shall further adopt the idea of smoothing \cite{anderson2012optimal} to improve the algorithmic performance. Instead of solving (\ref{online}), it is better to consider 
\bee\label{online1}
(R(t),T(t))=\underset{R\in\text{SO}(3),T}{\argmin}\, \sum\limits_{k=1}^{t}{\lVert Rp^l(k) + T - \acute{y}(k) \rVert^2},
\ene
where 
$
\acute{y}(k)=P_{\cS(k)}(R(t-1)p^l(k) + T(t-1))$ if $k\in[t-b, t-1]$ and $\acute{y}(k)=y(k)$ if $k<t-b$. Here $b$ denotes the length of smoothing interval and indicates the tradeoff between computational cost and performance improvement. Clearly,  Algorithm \ref{alg:opa2} corresponds to the special case $b=1$. Then, the optimization problem in (\ref{online1}) can be recursively solved. Let $\Delta(k)=\acute{y}(k)-y(k)$, which is zero if $k<t-b$, and compute $\bar{\Delta}(t)=\frac{1}{t}\sum_{k=1}^t \Delta(k) $. We solve it by replacing  $\bar{y}(t)$ and $P(t)$ with $\bar{y}(t)+\bar{\Delta}(t)$ and $$P(t)+\sum\nolimits_{k=t-b}^t(\Delta(k)-\bar{\Delta}(t))(p^l(k)-\bar{p}^l(t))'$$ in (\ref{onlinesolution}), respectively.

Since the localization problem is typically non-convex, we are unable to prove the asymptotic convergence of $(R(t),T(t))$. Jointly with (\ref{projonline}) and (\ref{online}), one may also use a discount factor $\alpha\in(0, 1)$ to emphasize the importance of the latest range measurements, e.g., 
$$
(R(t),T(t))=\underset{R\in\text{SO}(3),T}{\argmin}\, \sum\limits_{k=1}^{t}\alpha^{-k}{\lVert Rp^l(k) + T - y(k) \rVert^2}
$$
{and replace $1/t$ in \eqref{ypl} by $(1-\alpha)/(1-\alpha^t)$.}

\section{Localizing Multiple Target Nodes in the Sensor Network}\label{multinode}

In this section, we are interested in the localization problem of multiple target nodes in the mobile sensor network with generic time-varying communication topology $\cG(t)$. In \cite{BOMIN17}, the SDP based approach can only deal with the network setting that the only one anchor is connected to all target nodes. Such a scenario gives a star topology, which is trivial to treat by using the results on the situation with  one anchor and one target node. While for general mobile sensor networks, they leave it to future work. By using the approach in Section \ref{twonode}, we are able to  solve this problem, which is the focus of this section.
\subsection{Optimization problem reformulation using projection}
The loss function \eqref{eq:objective_function} introduces coupled summands, which makes the problem difficult. We shall use the projection idea in Section \ref{twonode} to reformulate the optimization problem in (\ref{eq:mle}). As in (\ref{eq:spherical_surfaces}), define the spherical surfaces
 \begin{equation*}
 \begin{split}
 \cS_{ij}(t)&=\{y\in\bR^3|\twon{y}=d_{ij}(t)\}, \forall  (i,j)\in\cE(t),\\
 \cS_{ia}(t)&=\{y\in\bR^3|\twon{y-p_a^g(t)}=r_{ia}(t)\},\forall (i,a)\in\cE(t).
 \end{split}
 \end{equation*} 
  In view of (\ref{eq:t_th_summand_inf_y}), the loss functions in (\ref{eq:sub_objective_function}) are rewritten as
\begin{equation}\label{eq:reform}
\begin{split}
&f_{ij}^{\mathcal{T}}(t) \\
&=  \min_{y_{ij}(t)\in\cS_{ij}(t)}\lVert R_ip_i^l(t) + T_i  - R_jp_j^l(t) - T_j - y_{ij}(t) \rVert^2,\\
&f_{ia}^{\mathcal{A}}(t) =\min_{y_{ia}(t)\in\cS_{ia}(t)} \lVert R_ip_i^l(t) + T_i  -  y_{ia}(t) \rVert^2.
\end{split}
\end{equation}
 
With a slight abuse of notations, let 
\begin{equation}\label{measupdate}
\begin{split}
y(t)&=\{y_{ij}(t),y_{ia}(t)\}_{(i,j)\in\cE(t),(i,a)\in\cE(t)},\\
\cS(t)&=\{\cS_{ij}(t),\cS_{ia}(t)\}_{(i,j)\in\cE(t),(i,a)\in\cE(t)}.
\end{split}
\end{equation}

Jointly with (\ref{eq:reform}), the problem in (\ref{eq:mle}) can be reformulated as
\begin{equation}\label{eq:multimle}
\begin{split}
&\underset{R, T, y(t), t\in[1:\overline{t}]}{\text{minimize}}~~  \sum_{t=1}^{\bar{t}}\sum_{i=1}^n  f_i(R,T,y(t)) \\ 
&~~ \text{subject to} \quad R \in\text{SO}(3)^n, y(t)\in\cS(t), t\in[1:\overline{t}],
\end{split}
\end{equation}
where the summand is given by
\begin{equation*}
\begin{split}
f_i(R,T,y(t))&=\sum_{a\in\cA_i(t)}\lVert R_ip_i^l(t) + T_i  -  y_{ia}(t) \rVert^2\\
&\hspace{-1cm}+\sum_{j\in\cT_i(t)}\lVert  R_ip_i^l(t) + T_i  - R_jp_j^l(t) - T_j  - y_{ij}(t) \rVert^2.
\end{split}
\end{equation*}

In the sequel, we shall design a block coordinate descent algorithm to solve the  optimization problem in (\ref{eq:multimle}). 
\subsection{Parallel projection algorithms}

Clearly, the objective function in (\ref{eq:multimle}) is quadratically convex.  We only need to handle the non-convex constraints $\text{SO}(3)$ and spherical surfaces $\cS(t), t\in[1:\overline{t}]$.

Now, we design a block coordinate descent algorithm \cite{bertsekas1999nonlinear} with parallel projections to solve \eqref{eq:multimle}. Specifically, given $(R^k, T^k)$, we update $y(t), t\in[1:\overline{t}]$ by 
\begin{equation*}
\begin{split}
&y^k(t)=\underset{y(t)\in\cS(t)}{\text{\argmin}}~~\sum_{i=1}^n  f_i(R^k,T^k,y(t)),
\end{split}
\end{equation*}
which can be explicitly expressed as
\begin{equation}\label{projectyij}
\begin{split}
y_{ij}^k(t)&=P_{\cS_{ij}(t)}( R_i^kp_i^l(t) + T_i ^k - R_j^kp_j^l(t) - T_j^k  ),\\
y_{ia}^k(t)&=P_{\cS_{ia}(t)}(R_i^kp_i^l(t) + T_i ^k), \forall (i,j), (i,a)\in\cE(t)
\end{split}
\end{equation}
and the projection onto a spherical surface is given in (\ref{eq:projection_onto_ct}). 

Next, we shall update $(R,T)$ by fixing $y(t)=y^k(t)$, i.e., 
\begin{equation}\label{updatert}
\begin{split}
\underset{R\in \text{SO}(3)^n, T}{\text{minimize}}~~ \sum_{i=1}^n \sum_{t=1}^{\bar{t}} f_i(R,T,y^k(t)),
\end{split}
\end{equation}

To solve the above optimization problem, the major difficulty lies in the constraints of $\text{SO}(3)$. Two ideas are adopted.
\subsubsection{Constrained least squares}
The first idea is to solve an unconstrained least squares  problem, i.e.,
\begin{equation}\label{unconsopt}
\begin{split}
(Z^k,T^{k+1})=\underset{Z, T}{\argmin}~~ \sum_{i=1}^n \sum_{t=1}^{\bar{t}} f_i(Z,T,y^k(t)),
\end{split}
\end{equation}
and then project $Z^k$ onto the constraints of $\text{SO}(3)^n$, i.e., 
\begin{equation*} 
R_i^{k+1}=\underset{ R_i\in\text{SO}(3)}{\argmin}~~  \twon{R_i-Z_i^k}_F^2,
\end{equation*}
which is explicitly given in (\ref{eq:solution_of_projection_onto_so}).

The remaining problem is how to effectively solve the least squares problem in (\ref{unconsopt}). For this purpose,  we represent $Z_ip_i^l(t) + T_i$ as a linear function of $\bm{x}_i$, where $\bm{x}_i \in \mathbb{R}^{12}$ is a column vector reshaping from $(Z_i,T_i)$. Specifically, {denote
$$B_i(t) = \begin{bmatrix} I_3\otimes p_i^l(t)', I_3 \end{bmatrix}, 
~ \bm{x}_i = \begin{bmatrix} \text{vec}(Z_i)', T_i' \end{bmatrix}',$$ where $\otimes$ denotes the Kronecker product, $\text{vec}(Z_i)\in\bR^{9}$ is a large vector by stacking all the columns of $Z_i \in\mathbb{R}^{3 \times 3}$,  and $I_3 \in \mathbb{R}^{3 \times 3}$ is an identity matrix. } Then, it follows that 
 $$Z_ip_i^l(t) + T_i = B_i(t)\bm{x}_i,$$
and the objective function in (\ref{unconsopt}) is rewritten as
\begin{equation*}
\begin{split}
 &f({\bm x})=\sum_{t=1}^{\bar{t}}\left(\sum_{(i,j)\in\cE(t)} \twon{B_i(t)\bm{x}_i -  B_j(t)\bm{x}_j-y_{ij}^k(t)}^2 +\right.\\
 &~~~~~~~~~~\left.\sum_{(i,a)\in\cE(t)}\twon{B_i(t)\bm{x}_i-y_{ia}^k(t)}^2\right),
\end{split}
\end{equation*}
which clearly is quadratic in the decision vector ${\bm x}$.

For a  graph $\cG(t)$, we define a  sparse block matrix $E(t)\in\bR^{|\cE(t)|\times n}\otimes \bR^{3\times 12}$ over the graph for a compact form of $f({\bm x})$. Particularly, if $e=(i,j)\in\cE(t)$ and $j\in\cT_i(t)$, then the $(e,i)$-th block of $E(t)$ is $B_i(t)$ and the $(e,j)$-th block of $E(t)$ is $-B_j(t)$. If $e=(i,a)\in\cE(t)$ and $a\in\cA_i(t)$, then the $(e,i)$-th block of $E(t)$ is $B_i(t)$. All the unspecified blocks are set to be zero matrices with compatible dimensions.  This implies that the objective function in (\ref{unconsopt})  can be compactly expressed as
\begin{equation*}
f({\bm x})= \sum\nolimits_{t=1}^{\overline{t}}{\lVert E(t)\bm{x} - {y}^k(t) \rVert^2}.
\end{equation*}
Clearly, the minimizer of $f({\bm x})$ is simply given by
\bee\label{ls}
{\bm x}^{ls}=\left(\sum_{t=1}^{\bar{t}}E(t)'E(t)\right)^{-1}\left(\sum_{t=1}^{\bar{t}}E(t)'{y}^k(t)\right) \in\bR^{12n}.
\ene

To compute the above ${\bm x}^{ls}$, let $Q(t)=E(t)'E(t) \in \mathbb{R}^{n\times n}\otimes \bR^{12\times 12}$. Denote the $(i,j)$-th block of $Q(t)$ by $Q(t)_{(i,j)}\in\bR^{12\times 12}$, it follows that
\begin{equation}\label{eq:qii}
Q(t)_{(i,i)}=(2|\cT_i(t)|+|\cA_i(t)|)(B_i'(t)B_i(t))
\end{equation}
and
\begin{equation}
Q(t)_{(i,j)}=
\left\{\begin{array}{ll}
-2B_i'(t)B_j(t),&\text{if}~ (i, j) \in \mathcal{E}(t),\\
0,&\text{otherwise.}
\end{array}\right.
\end{equation}
where $|\cT_i(t)|$ and $|\cA_i(t)|$ denote the cardinality of the sets $\cT_i(t)$ and $\cA_i(t)$ respectively, and
$$B_i'(t)B_j(t) = \begin{bmatrix} I_3 \otimes p^l_{i}(t)p^l_{j}(t)' & I_3 \otimes p^l_{i}(t)\\ I_3 \otimes p^l_{j}(t)' &I_3 \otimes I_3 \end{bmatrix}.$$

Similarly, the $i$-th block of $E(t)'{y}^k(t)$ is defined as $E(t)'{y}^k(t)_{(i)}$ and given by
\begin{equation*}
\begin{split}
B_i(t)\left(
\sum\limits_{j \in \cT_i(t)}
{
	(y^k_{ij}(t) - y^k_{ji}(t))
}+
\sum\limits_{a \in \cA_i(t)}
{
	y^k_{ia}(t)
}
\right)\in\bR^{12}.
\end{split}
\end{equation*}

%This further implies that
%\begin{equation*}
%\begin{split}
%E(t)' y^k(t)_{(i)}=B_i(t) \big( 2 \sum\limits_{j \in \cT_i(t)}y^k_{ij}(t)+\sum\limits_{a \in \cA_i(t)}y^k_{ia}(t)\big).
%\end{split}
%\end{equation*}
 Let
$
\widehat{y}_i^k(t)=\sum\nolimits_{j \in \cT_i(t)}\left(y^k_{ij}(t)-y^k_{ji}(t)\right)+\sum\nolimits_{a \in \cA_i(t)}{y^k_{ia}(t)}$,
then 
\begin{equation}\label{eq:ey_final}
E(t)' y^k(t)_{(i)}=\begin{bmatrix} \widehat{y}_i^k(t) \otimes p^l_{i}(t)\\ \widehat{y}_i^k(t) \end{bmatrix}.
\end{equation}

Jointly with \eqref{eq:qii}-\eqref{eq:ey_final}, the minimizer in \eqref{ls} can be readily computed.  If the graph $\cG(t)$ is fixed,   (\ref{ls}) can be cast as a sparse least squares problem, see e.g. \cite{fong2011lsmr} for details.

\begin{algorithm} [t!]
	\caption{The Parallel Projection Algorithm for Localizing Multiple Target Nodes}
	\label{alg:ppa3} 
	\begin{enumerate}
	\renewcommand{\labelenumi}{\theenumi:}
	\item {\bf Input:} Every target node $i$ collects the information $\cI_i(\overline{t})=\bigcup_{t=1}^{\bar{t}}\{m_i(t),p_i^l(t),p_a^g(t)|i\in\cT_i(t),a\in\cA_i(t)\}$, where $m_i(t)$ is defined in (\ref{eqmeasure}). A master (fusion center) collects time-varying graphs $\bigcup_{t=1}^{\bar{t}}\cG(t)$.
\item {\bf Initialization:} The master arbitrarily selects $R_i^0\in\text{SO}(3)$ and $T_i^0\in\bR^3$, and sends to each target node $i\in\cT$.
\item {\bf Repeat}
\item {\bf Parallel projection:} Each target node $i$ simultaneously computes 
$$\{y_{ij}^{k}(t), y_{ia}^k(t)|j\in\cT_i(t),\cA_i(t)\},\quad t\in[1:\overline{t}]$$
and $E(t)' y^k(t)_{(i)}$ by using (\ref{projectyij}) and (\ref{eq:ey_final}), respectively, and send $E(t)' y^k(t)_{(i)}$ to the master. 

\item {\bf Master update:} The master computes the least squares vector in (\ref{ls}), which is the solution of (\ref{unconsopt})  and obtains $(Z^k,T^{k+1})$. Then, it sets
$$
R_i^{k+1}=P_{ \text{SO}(3)}(Z_i^k)
$$
by using (\ref{eq:solution_of_projection_onto_so}) and sends $(R_i^{k+1},T_i^{k+1})$ to each target node $i\in\cT$.
\item {\bf Set} $k=k+1$.
\item {\bf Until} a predefined stopping rule (e.g., a maximum iteration number) is satisfied.
 \end{enumerate}
\end{algorithm}

\subsubsection{Jacobi iterative method}
We can also solve the optimization problem in (\ref{updatert}) by using the Jacobi iterative method \cite{bertsekas1999nonlinear}.  Particularly,  we compute $(R_i, T_i)^{k+1}$  by setting $(R_{-i},T_{-i})$ to be $(R_{-i},T_{-i})^k$, where $(R_{-i},T_{-i})=\{(R_j,T_j)\}_{j\in\cT, j\not= i}$, i.e., 
\begin{equation}\label{updatert1}
\begin{split}
(R_i,T_i)^{k+1}&=\underset{R_i\in \text{SO}(3), T_i}{\argmin}~~\sum_{t=1}^{\bar{t}}g_i(R,T,y^k(t))\\
&~~~\text{subject to}~(R_{-i},T_{-i})=(R_{-i},T_{-i})^k,
\end{split}
\end{equation}
where the objective  collects all summands in the objective function of (\ref{updatert}) containing the decision variables $(R_i, T_i)$, and $g_i(R,T,y^k(t))$ is given by
\begin{equation*}\begin{split}
g_i(R,&T,y^k(t))=\sum_{a\in\cA_i(t)}\twon{R_ip_i^l(t)+T_i-y_{ia}^k(t)}^2\\
&+\sum_{j\in\cT_i(t)}\twon{R_ip_i^l(t)+T_i-R_j^kp_j^l(t)-T_j^k-y_{ij}^k(t)}^2.
\end{split}
\end{equation*}

Then, the optimization problem in \eqref{updatert1}  has a similar structure to that of \eqref{master}, and can be solved as
\begin{equation}\label{eq:r_t_i_solution}
\begin{split}
R_i^{k+1} &= P_{\text{SO}(3)}(P_i^k),\\ 
T_i^{k+1} &= \overline{y}_i^k-R_i^{k+1}\overline{p}_i^l,
\end{split}
\end{equation}
where $\overline{y}_i^k$ and $\overline{p}_i^l$ are two mean vectors and $P_i^k$ is a correlation matrix, i.e., 
\begin{equation*}
\begin{split}
\overline{y}_i^k &= \frac{1}{n_i}\sum\nolimits_{t=1}^{\overline{t}}{\left(\sum\nolimits_{j \in \cT_i(t)}{\breve{y}^k_{ij}(t)}+\sum\nolimits_{a \in \cA_i(t)}{y^k_{ia}(t)}\right)},\\ 
\overline{p}_i^l &= \frac{1}{n_i}\sum\limits_{t=1}^{\overline{t}}{\left(|\cT_i(t)|+|\cA_i(t)|\right)p_i^l(t)},\\ 
P_i^k &= \sum\nolimits_{t=1}^{\overline{t}}{\left(\sum\nolimits_{j \in \cT_i(t)}{P^k_{ij}(t)}+\sum\nolimits_{a \in \cA_i(t)}{P^k_{im}(t)}\right)},\\
n_i&= \sum\nolimits_{t=1}^{\overline{t}}{\left(|\cT_i(t)|+|\cA_i(t)|\right)},\\ 
\breve{y}^k_{ij}(t) &= y^k_{ij}(t)+R_j^kp_j^l(t)+T_j^k,\\
P^k_{ij}(t) &= (\breve{y}^k_{ij}(t) - \overline{y}_i^k)(p_i^l(t) - \overline{p}_i^l)',\\
P^k_{ij}(t) &= (y^k_{ia}(t) - \overline{y}_i^k)(p_i^l(t) - \overline{p}_i^l)'.
\end{split}
\end{equation*}

Different from (\ref{ls}), the Jacobi iterative method does not need to solve the  least squares problem in (\ref{unconsopt}), which may need to compute the inverse of a large matrix, i.e., $\sum_{t=1}^{\bar{t}}E(t)'E(t)\in \mathbb{R}^{n\times n}\otimes \bR^{12\times 12}$. Instead,  we only need to use (\ref{eq:r_t_i_solution}) to replace Step 5 in Algorithm \ref{alg:ppa3}. 

\subsection{Distributed implementation of Jacobi method for fixed graphs}
Centralized algorithms are not scalable for the large network.  If $\cG(t)$ is fixed,  the Jacobi iterative method can even be implemented in a distributed way, which is termed as DPPA and given in Algorithm \ref{alg:ppa4}. 
{\begin{remark} By Proposition 2.3.1 \cite{bertsekas1999nonlinear}, we can obtain the similar result as Proposition \ref{prop_con} for Algorithms \ref{alg:ppa3}-\ref{alg:ppa4}. Take Algorithm \ref{alg:ppa3} as an example. Let  ${\bf y}=[y(1),\ldots,y(\bar{t})]$ and $$g(R, T,{\bf y})=\sum_{i=1}^n \sum_{t=1}^{\bar{t}} f_i(Z,T,y(t)).$$ Then, there is a convergent subsequence of $\{(R^k, T^k, {\bf y}^k)\}$. To elaborate it, we obtain from \eqref{unconsopt} that $g(Z^k,T^{k+1},{\bf y}^k)\le g(R^k,T^{k},{\bf y}^k)$. Since the projection operator is non-expansive, it implies that $g(R^{k+1},T^{k+1},{\bf y}^k)\le g(Z^k,T^{k+1},{\bf y}^k)$.  By \eqref{projectyij}, it holds that $g(R^{k+1}, T^{k+1},{\bf y}^{k+1})\le g(R^{k+1}, T^{k+1},{\bf y}^{k})$. Combining the above, it finally yields that 
$$g(R^{k+1}, T^{k+1},{\bf y}^{k+1})\le g(R^k,T^{k},{\bf y}^k).$$
The rest of proof follows exactly the same as that of Proposition \ref{prop_con}. \qed
\end{remark}}

\begin{algorithm} [t!]
	\caption{The Distributed PPA (DPPA) for Localizing Multiple Target Nodes in a Fixed Graph}
	\label{alg:ppa4} 
	\begin{enumerate}
	\renewcommand{\labelenumi}{\theenumi:}
	\item {\bf Input:} Every target node $i$ collects the information $\cI_i(\overline{t})=\bigcup_{t=1}^{\bar{t}}\{m_i(t),p_i^l(t),p_a^g(t)|i\in\cT_i,a\in\cA_i\}$, where $m_i(t)$ is defined in (\ref{eqmeasure}). 
	\item {\bf Initialization:} Every target node $i$ randomly selects $R_i^0\in\text{SO}(3)$ and $T_i^0\in\bR^3$, and then broadcasts to its neighboring target nodes $j\in\cT_i$.
	\item {\bf Repeat}
	\item {\bf Distributed update:} Each target node $i$ simultaneously computes 
$$\{y_{ij}^{k}(t), y_{ia}^k(t)|j\in\cT_i,\cA_i\},\quad t\in[1:\overline{t}]$$
by using (\ref{projectyij}), and $(R_i^{k+1},T_i^{k+1})$ by using $(\ref{eq:r_t_i_solution})$. Then, it broadcasts $(R_i^{k+1},T_i^{k+1})$ to its neighboring target nodes $j\in\cT_i$.
\item {\bf Set} $k=k+1$.
\item {\bf Until} a predefined stopping rule (e.g., a maximum iteration number) is satisfied.
 \end{enumerate}
\end{algorithm}

\section{Numerical Experiments}\label{Numerical Experiments}
In this section, we perform numerical experiments to validate the proposed algorithms in Python 2.7 environment on a MacBook Pro with 2.2 GHz Intel Core i7 CPU and 16GB DDR3. Open source packages such as Numpy 1.12.1 and cvxopt 1.1.9 are used for numerical computation. The experiments are implemented in both two dimensional space and three dimensional space. As there is no difference between the two cases, we only report results of the two dimensional case for visualization convenience.

\subsection{Experiment setup}
 For the two-node localization problem, the coordinate system of the target node is generated by a rotation matrix $R$ and a transformation vector $T$ as follow
$$
R = \left[\begin{matrix}\cos\theta&-\sin\theta\\ \sin\theta&\cos\theta\end{matrix}\right],~T = \left[\begin{matrix}a\\b\end{matrix}\right],
$$
where the rotation angle $\theta \sim \cU(0, 2\pi)$, and $a,b\sim \cU(-1,1)$ are randomly selected with uniform distributions.  The target node and the anchor node are randomly moving in a  square area $[1,9]\times[1,9]$. We also randomly select $p^l(t)$ and $p^g(t)$ such that $p^g(t)\in[1,9]\times[1,9]$ and $Rp^l(t)+T\in[1,9]\times[1,9]$. 

Then their range measurements at time slot $t$ are generated by $r(t) = \|Rp^l(t)+T-p^g(t)\|+\xi$ where the random noise is $\xi\sim N(0,\sigma^2)$. To quantify the noise level, define the signal-to-noise ratio (SNR) by
$$
\text{SNR}_{dB}=10\log_{10}\left(\frac{d_0^2}{\sigma^2}\right),
$$
where $d_0=4.1712$ is the average range of two nodes in the area $[1,9]\times[1,9]$. Clearly, a smaller SNR means a higher noise level. Our objective is to compute the coordinate system parameters under different signal-to-noise ratios by the proposed algorithms, which are denoted as $\widehat{R}$ and $\widehat{T}$. We are concerned with their relative errors 
\bee\label{rela-error}
err_R=\frac{\|\widehat{R}-R\|_F}{\|R\|_F},~err_T=\frac{\|\widehat{T}-T\|}{\|T\|}.
\ene

For each target node $i\in\mathcal{T}$ in the multi-node localization problem, $\widehat{R}_i$ and $\widehat{T}_i$ are denoted as the same way as that in the two-node localization scenario. Similarly, all nodes are limited to the square area $[1,9]\times[1,9]$.

\subsection{Experimental results of the two-node localization problem}
\label{Two-Agent Localization Simulations}
%Let $\text{SNR}=10$ and we run the PPA in Algorithm \ref{alg:ppa2} up to $60$ iterations and the result is shown  in Fig.\ref{fig:convergence}. Clearly, both the rotation angle $\theta$ and the translation vector $T$ are found. 
%\begin{figure}[ht]
%	\centering
%	\subfigure[Trajectory of $T$]{ % https://tug.org/TUGboat/tb34-1/tb106thurnherr.pdf
%		\includegraphics*[width=1.6in]{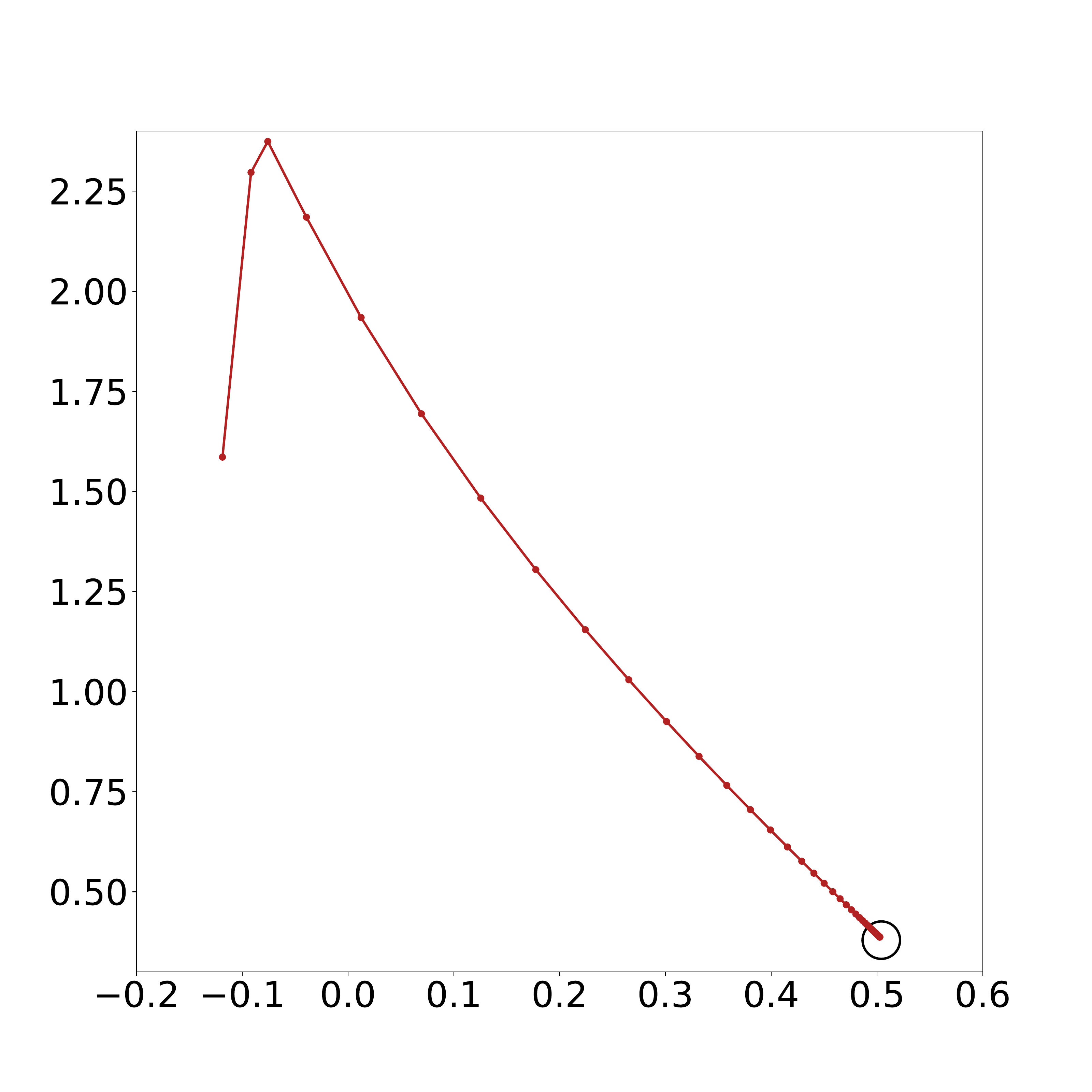}
%		\label{fig:t_pred_trajectory}}
%	%\quad
%	\subfigure[Rotation angle of $R$]{
%		\includegraphics*[width=1.6in]{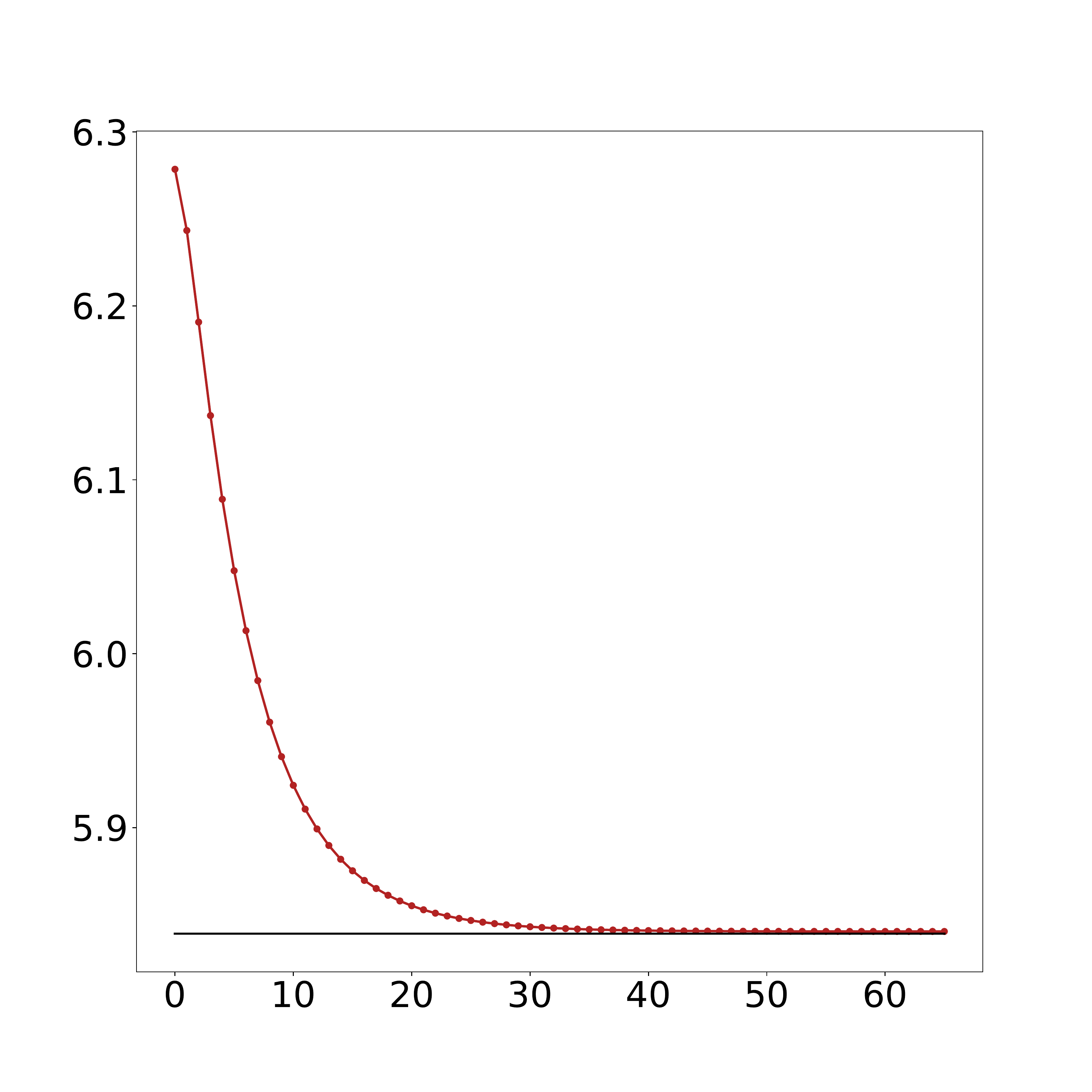}
%		\label{fig:theta}}
%	\caption{(a) The left figure illustrates the convergence of $T$ in the two dimensional space to the black circle, which represents the true translation vector $T$. (b) The right figure illustrates the convergence of the rotation angle to the black line, which represents the rotation angle of $R$.}
%	\label{fig:convergence}
%\end{figure}

We compare the proposed PPA with the SDP based method \cite{BOMIN17}. Since the method in \cite{BOMIN17} is unable to deal with general multi-node situations, we only compare their algorithm for localizing one target node with one anchor node.

Numerical experiments are performed under two noise levels\footnote{Kindly note that the results of $\text{SNR}=10, 20, 30$ are consistent, we only report the case of $\text{SNR}=20$ for saving space.} ($\text{SNR}=20$ and $\text{SNR}=80$) and three different $\overline{t}$. The results  in Fig. \ref{fig:compare_ppa_sdp} are obtained by averaging over $10^4$ independent simulations. Since rotation matrices are more difficult to estimate, we choose to report results mostly on rotations and only include the final results on translations for saving space.  The green line corresponds to the use of the pure SDP, and the red line is the result of the PPA of Algorithm \ref{alg:ppa2}. The blue line is the result of  the GD with the SDP initialization \cite{BOMIN17}, i.e., SDP+GD, while the purple line is the result of the PPA with the SDP initialization, i.e., SDP+PPA.  We also record the time used for running different algorithms. In Fig. \ref{fig:compare_ppa_sdp}(a), it takes 4.96e-01s to find the SDP based solution. To achieve the same relative rotation error, it only takes 2.19e-03s by using PPA. Moreover, it only takes 4.20e-03s for PPA to outperform the SDP+GD, whose running time is (4.96e-01+6.88e-03)s. We also observe that the SDP+PPA finally achieves the smallest relative rotation error. If the SNR is large, see Fig.\ref{fig:compare_ppa_sdp}(b), the PPA cannot reduce the relative rotation error as small as that of the SDP {due to the use of random initialization and  the gap induced by the SDP relaxation decreases with SNR.} However, the SDP+PPA performs much better than the SDP+GD, both in terms of running time and accuracy. In Table \ref{ret}, we include the final results, i.e., the number of iterations is set to $1000$, on the relative translation errors when $\bar{t}=20$.   In summary, both Fig. \ref{fig:compare_ppa_sdp} and  Table \ref{ret} consistently validate the advantages of the PPA of Algorithm \ref{alg:ppa2}.

\begin{figure}[t!]
	\centering
	\subfigure[$\overline{t}=10$, $\text{SNR}=20$]{ % https://tug.org/TUGboat/tb34-1/tb106thurnherr.pdf
		\includegraphics*[width=1.6in]{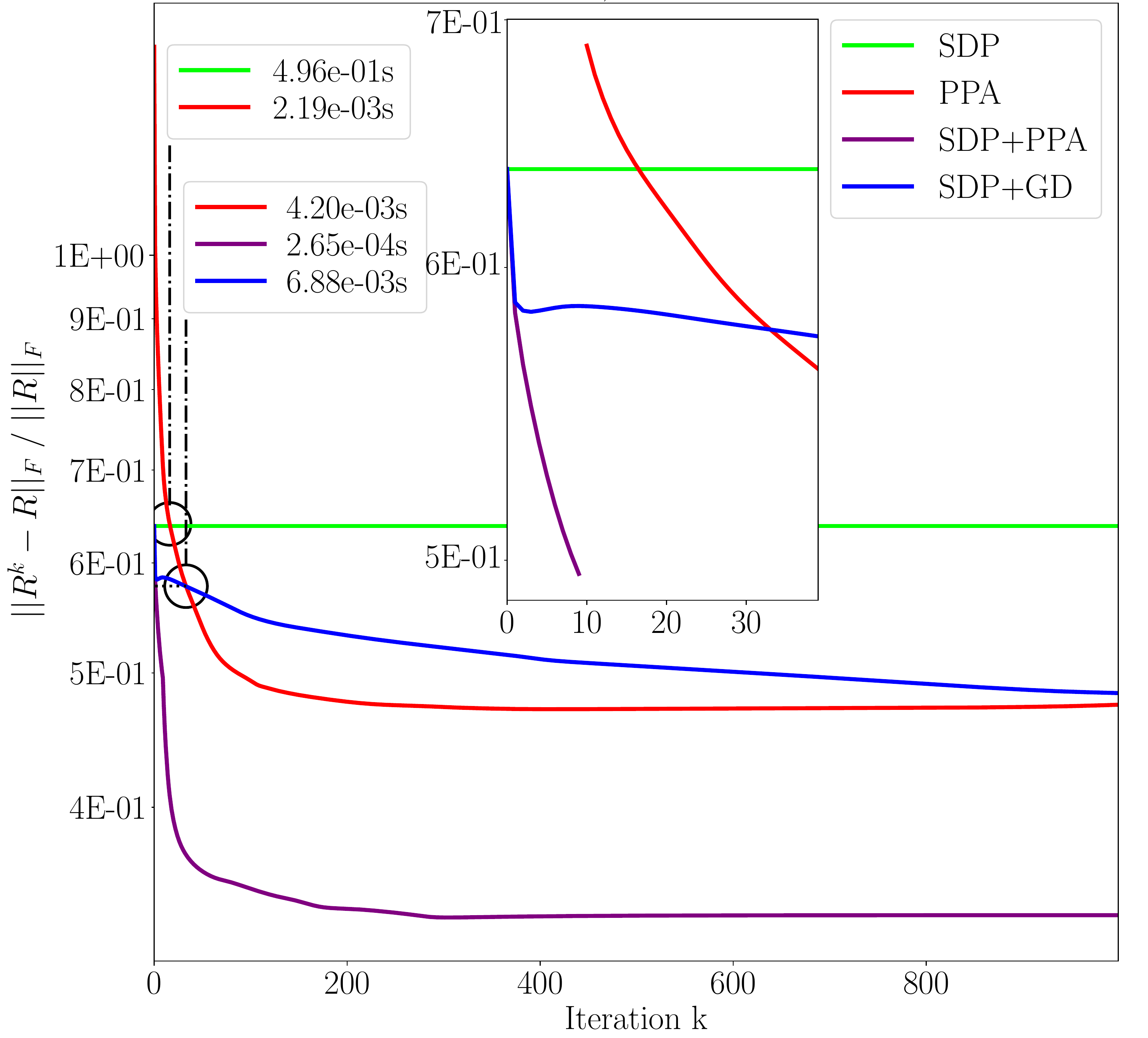}
		\label{ppa_sdp_a}}
	\subfigure[$\overline{t}=10$, $\text{SNR}=80$]{
		\includegraphics*[width=1.6in]{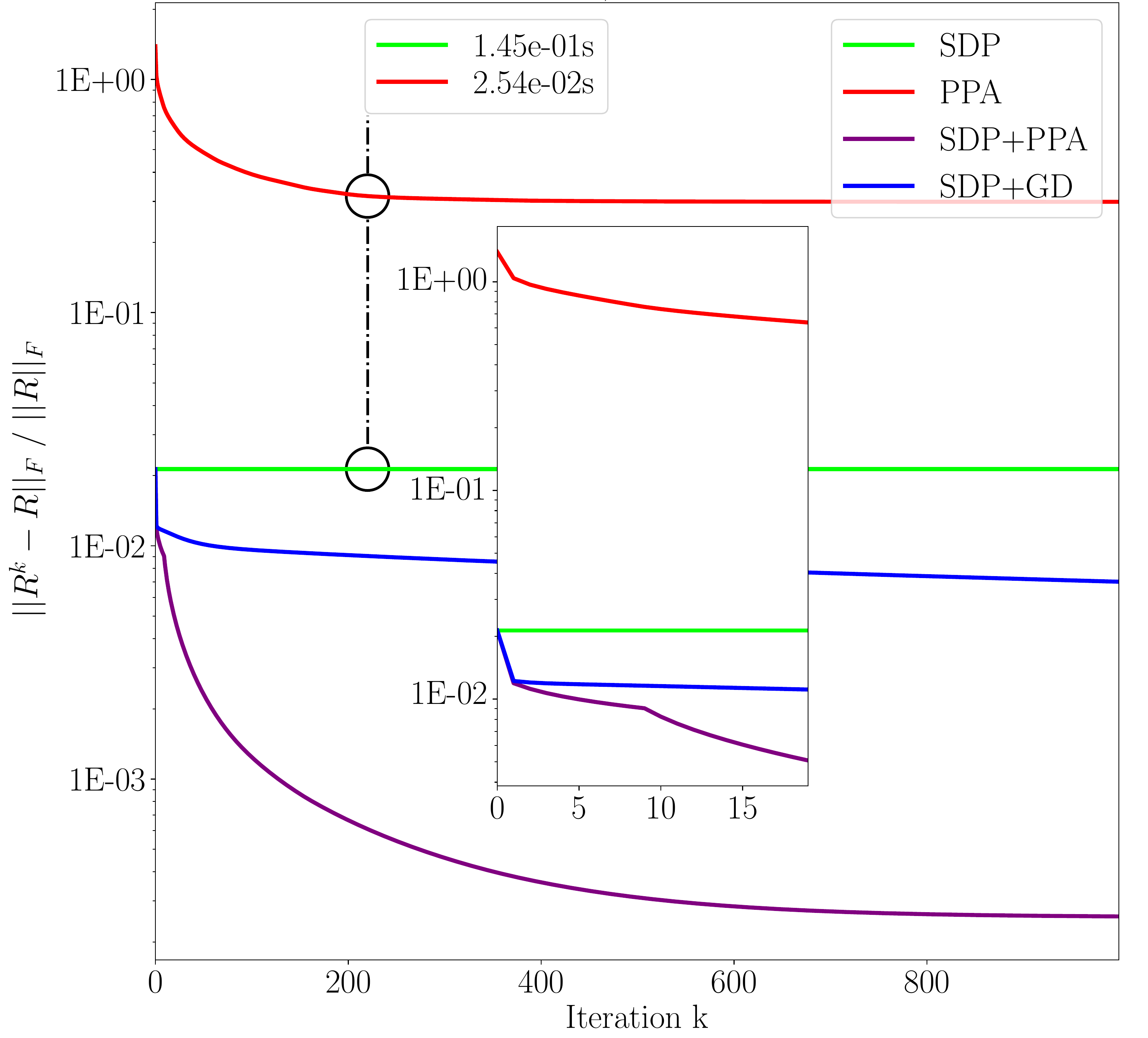}
		\label{ppa_sdp_b}}
	\subfigure[$\overline{t}=20$, $\text{SNR}=20$]{ % https://tug.org/TUGboat/tb34-1/tb106thurnherr.pdf
		\includegraphics*[width=1.6in]{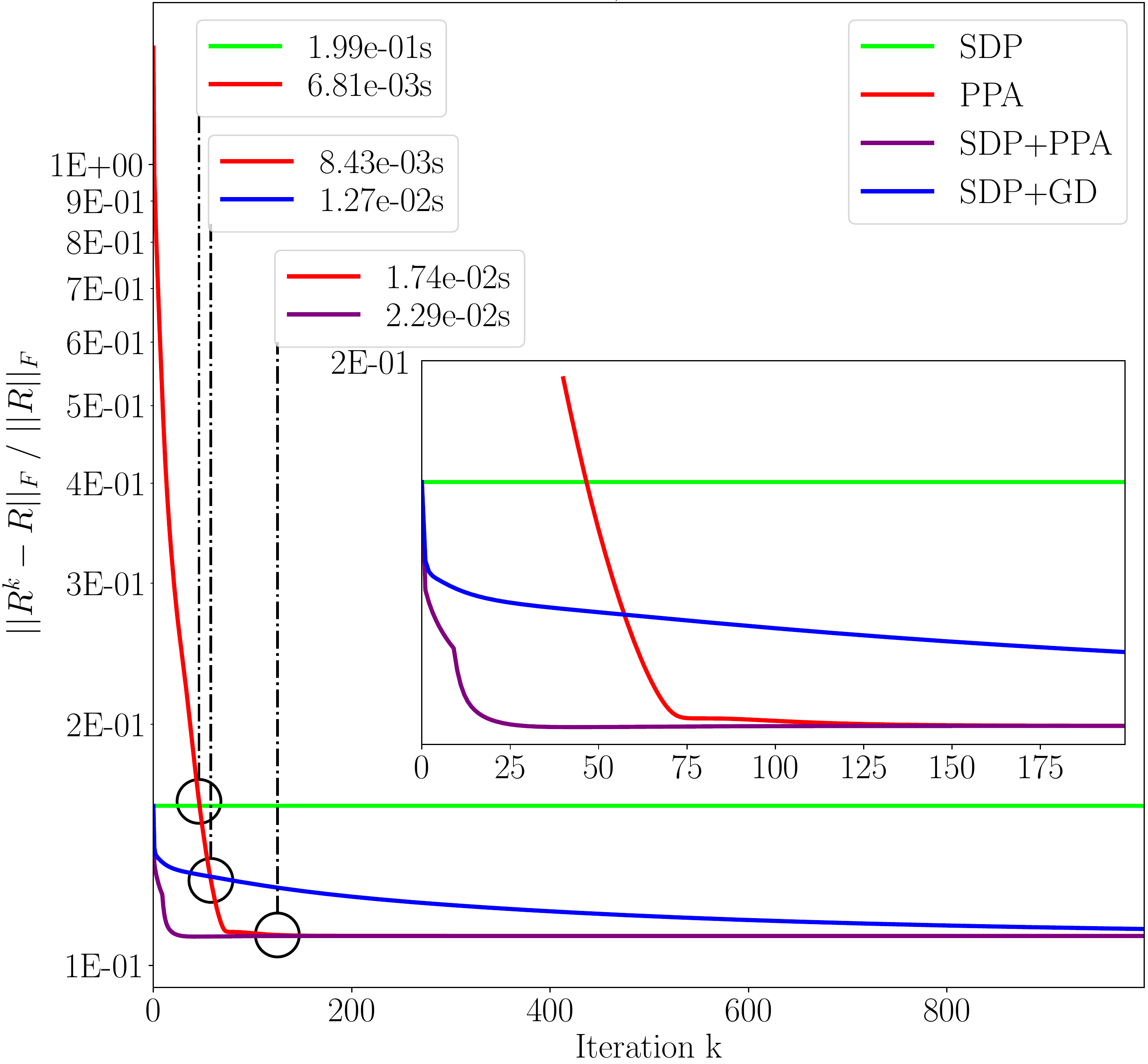}
		\label{ppa_sdp_c}}
	\subfigure[$\overline{t}=20$, $\text{SNR}=80$]{
		\includegraphics*[width=1.6in]{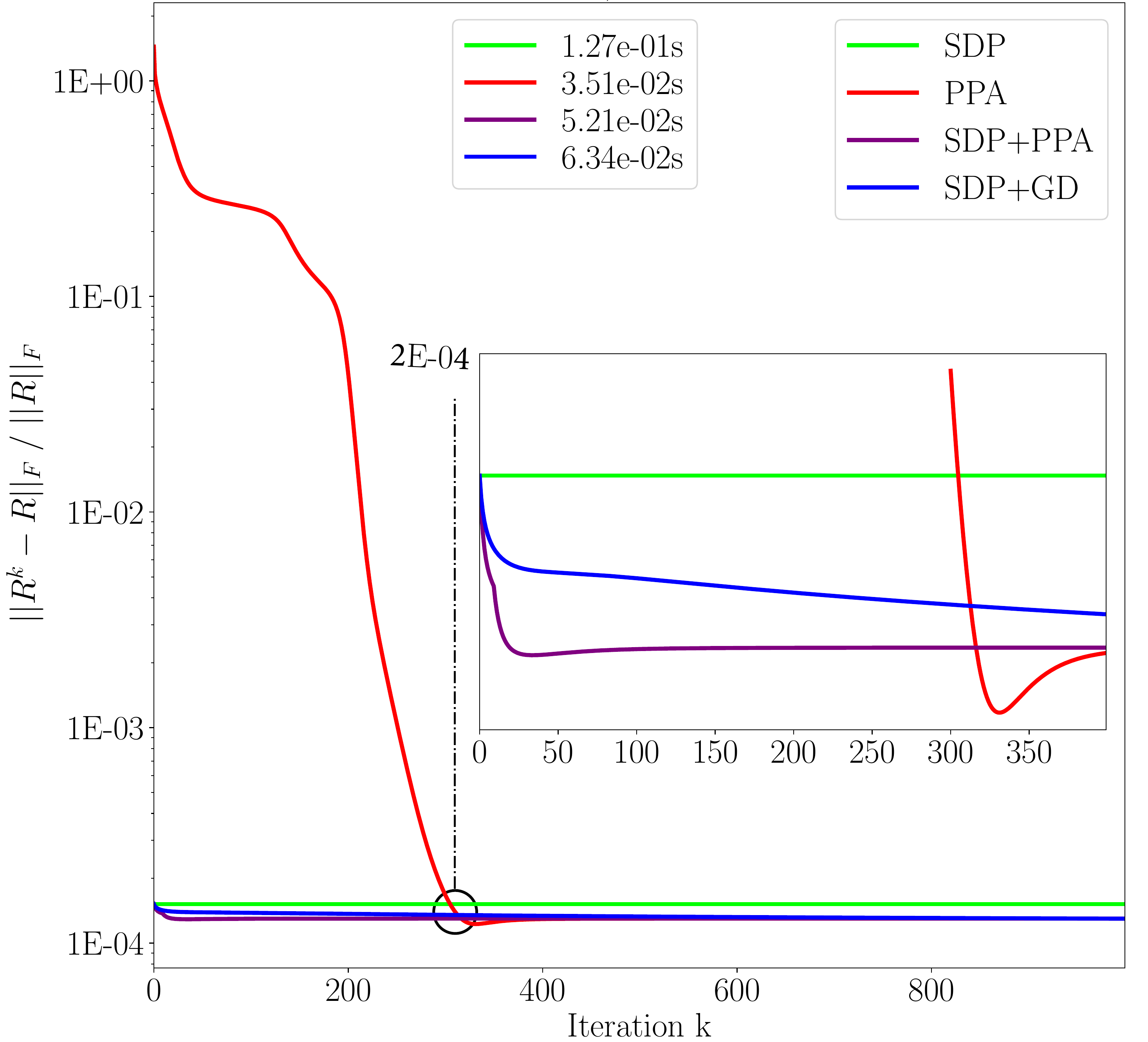}
		\label{ppa_sdp_d}}
	\subfigure[$\overline{t}=40$, $\text{SNR}=20$]{ % https://tug.org/TUGboat/tb34-1/tb106thurnherr.pdf
		\includegraphics*[width=1.6in]{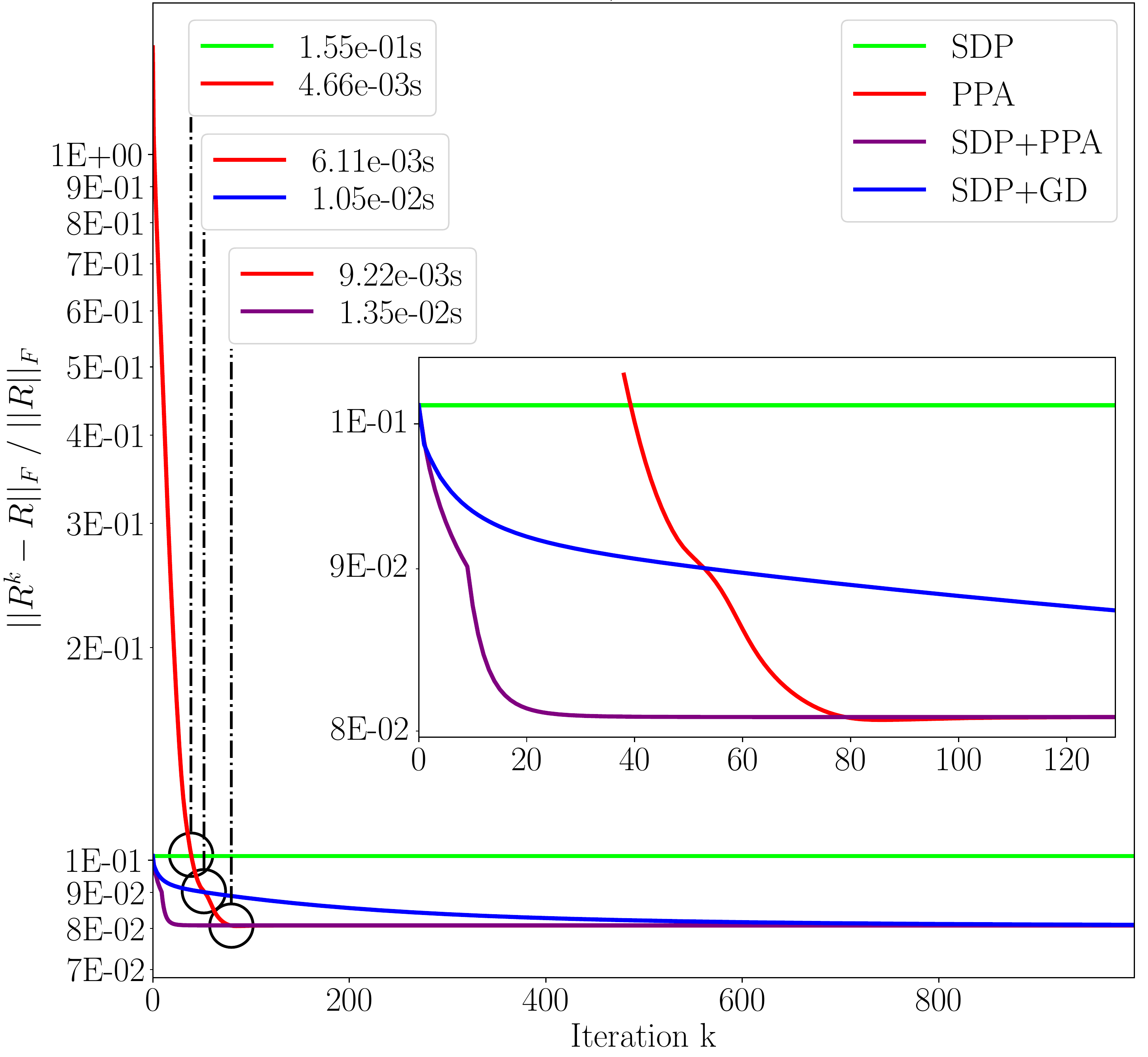}
		\label{ppa_sdp_e}}
	\subfigure[$\overline{t}=40$, $\text{SNR}=80$]{
		\includegraphics*[width=1.6in]{compare_num_20_snr_80}
		\label{ppa_sdp_f}}
	\caption{Convergence and running time of the PPA and the state-of-the-art algorithms for different $\overline{t}$ and SNR.}
	\label{fig:compare_ppa_sdp}
\end{figure}

\begin{table}[h!]
\centering
 \caption{Relative errors of translation (\%).} 
 \label{ret}
 \begin{tabular}{|c|c|c|c|c|}
 \hline
 Algorithms & SDP & PPA & SDP+PPA & SDP+GD\\
  \hline
 SNR=20 & 7.85 &  5.87  &  5.87 &5.99 \\
 \hline 
 SNR=30 & 6.20&4.48&4.48&5.12\\
 \hline SNR=80 & 0.09 &0.06&0.06&0.06\\ \hline
 \end{tabular}
\end{table}

Next, the performance of the RPA of Algorithm \ref{alg:opa2} is shown in Fig. \ref{rpa}, which illustrates that the relative error of the rotation matrix essentially decreases with the number of range measurements. Due to the use of approximation in deriving the RPA of \eqref{projonline}, it further induces performance degradation in comparison with the PPA. 
Note that the method in \cite{BOMIN17} is unable to write in a recursive form.

\begin{figure}[h!]
	\centering
			\includegraphics*[width=2.5in]{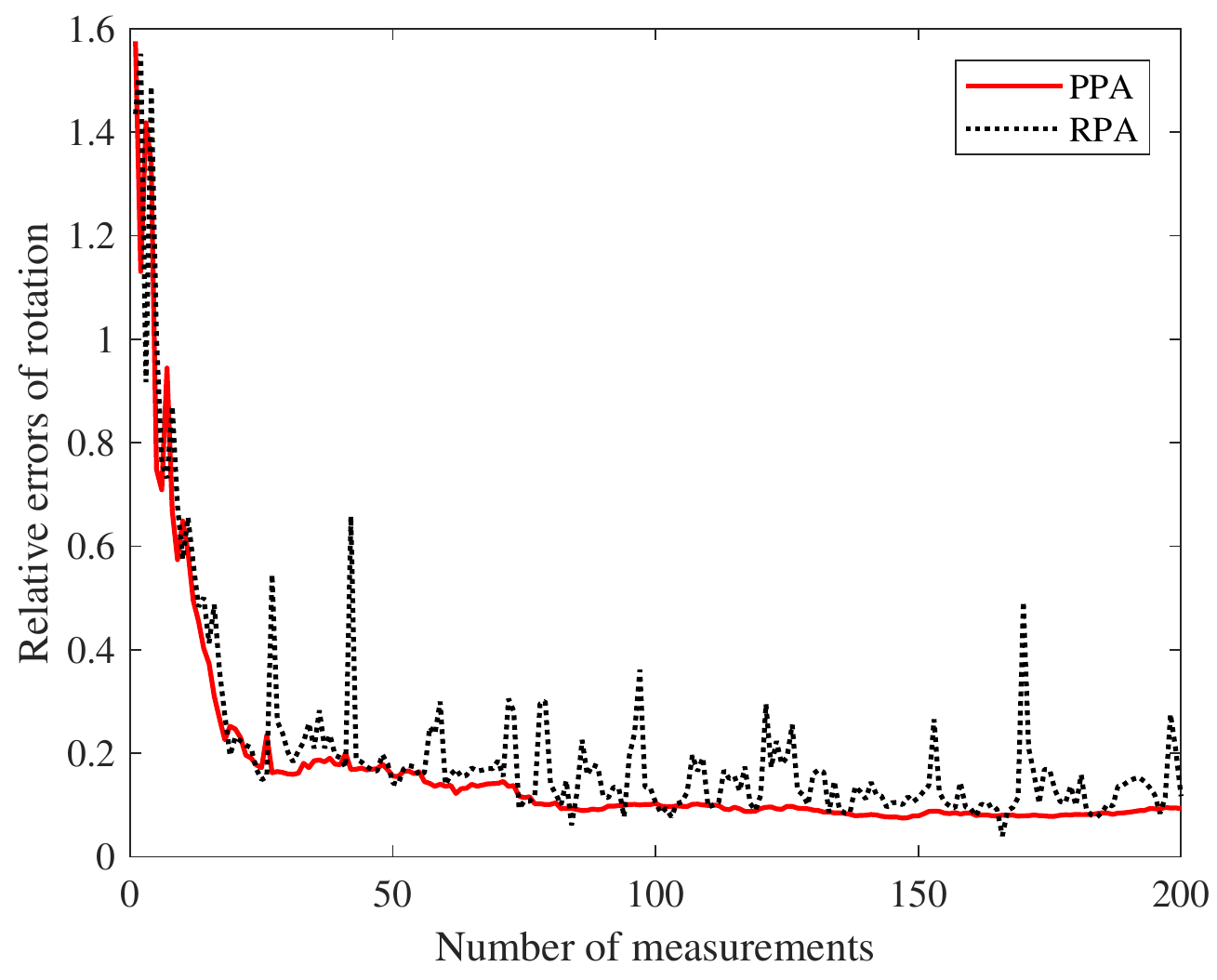}
		\label{ppa_sdp_a}
	\caption{Relative errors of rotation matrices by PPA and RPA for SNR=20.}
	\label{rpa}
\end{figure}

\subsection{Experimental results of the multi-node localization problem}
%We first consider time-varying networks consisting of three anchor nodes and three target nodes. To align the coordinate systems of target nodes, we use Algorithm \ref{alg:ppa3} under different initiations and the resulting estimates are presented in Fig.\ref{fig:position_pred_demo} and Fig.\ref{fig:position_pred_wrong_demo}, respectively where gray disks denote the estimated positions of target nodes. We observe that both cases well localize the target nodes.
%
%\begin{figure}[ht]
%	\centering
%	\subfigure[Network topology at $t=1$]{ % https://tug.org/TUGboat/tb34-1/tb106thurnherr.pdf
%		\includegraphics*[width=1.6in,viewport=140 130 680 640]{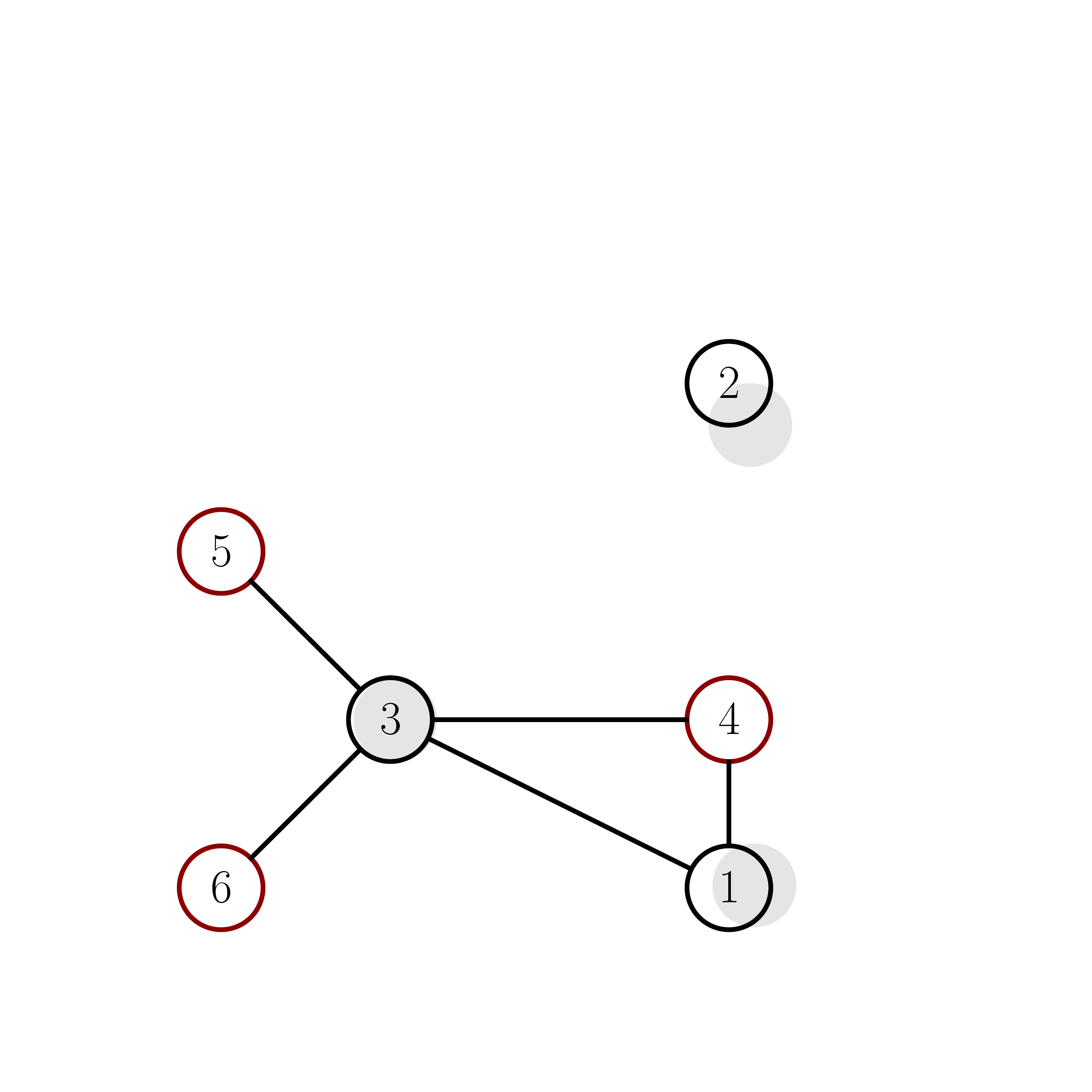}
%		\label{fig:subfigure_network_a_pred}}
%	%\quad
%	\subfigure[Network topology at $t=2$]{
%		\includegraphics*[width=1.6in,viewport=140 130 680 640]{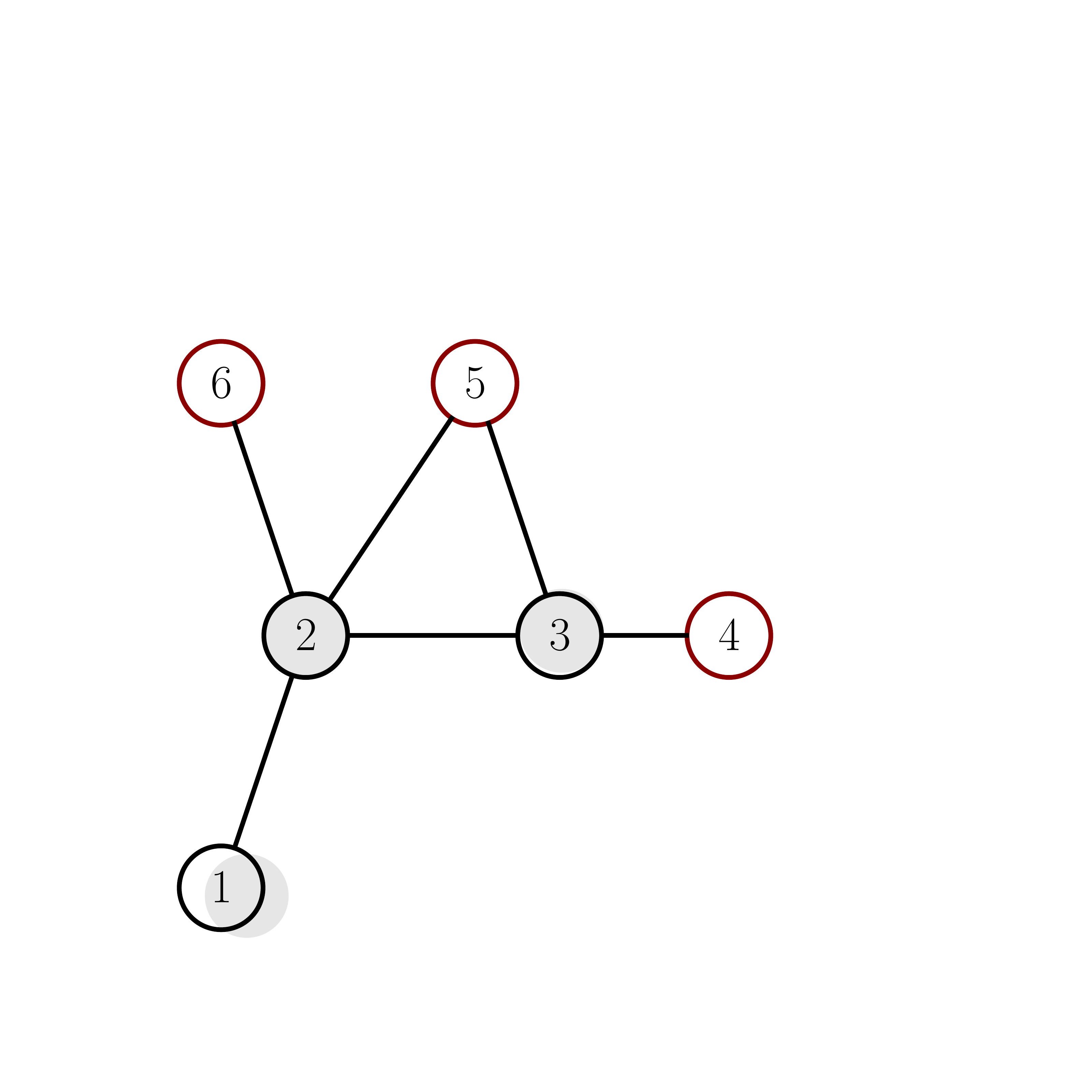}
%		\label{fig:subfigure_network_b_pred}}
%	\caption{Localization results by using Algorithm \ref{alg:ppa3}. }
%	\label{fig:position_pred_demo}
%\end{figure}
%\begin{figure}[ht]
%	\centering
%	\subfigure[Network topology at $t=1$]{ % https://tug.org/TUGboat/tb34-1/tb106thurnherr.pdf
%		\includegraphics*[width=1.6in,viewport=140 130 680 740]{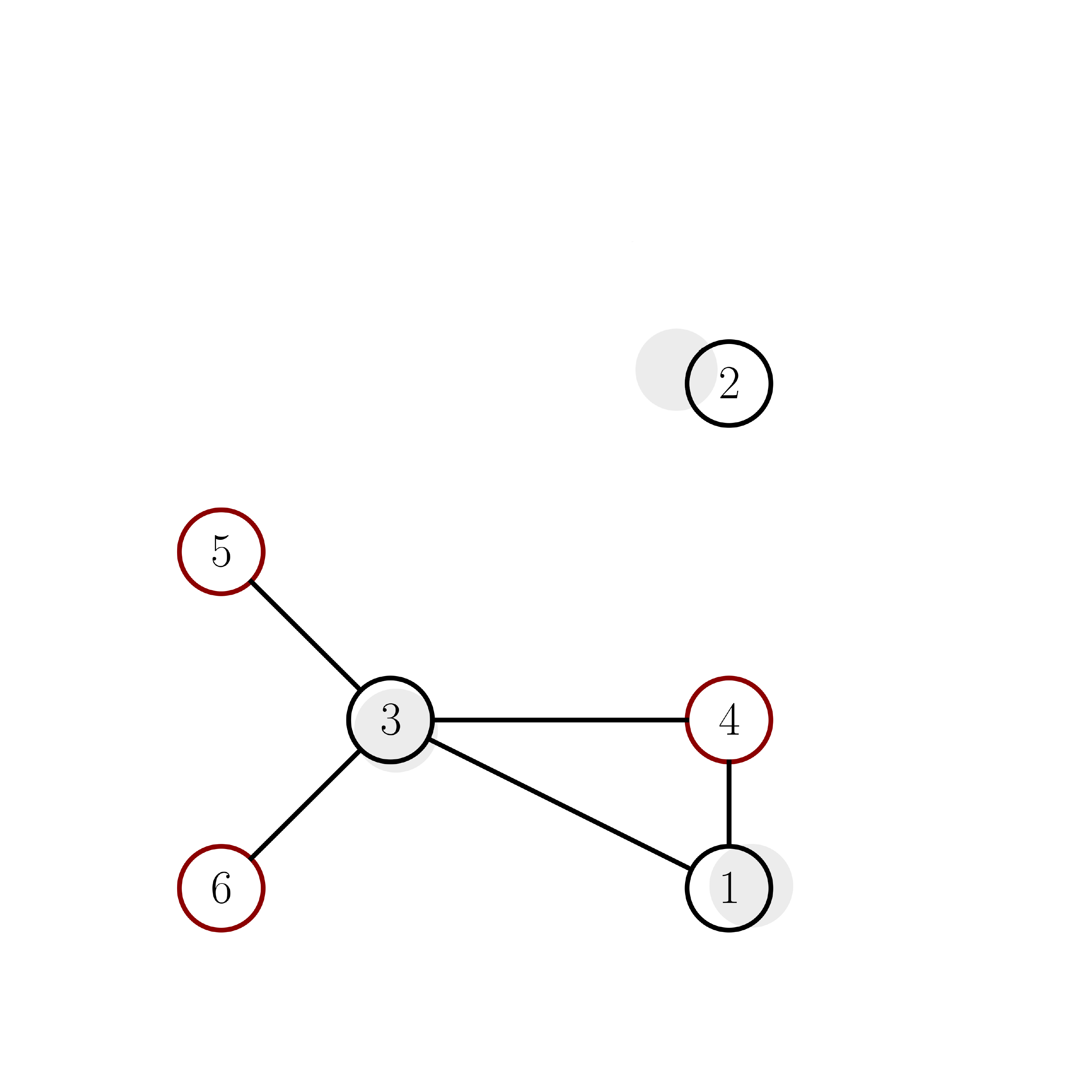}
%		\label{fig:subfigure_network_a_pred_wrong}}
%	%\quad
%	\subfigure[Network topology at $t=2$]{
%		\includegraphics*[width=1.6in,viewport=140 130 680 740]{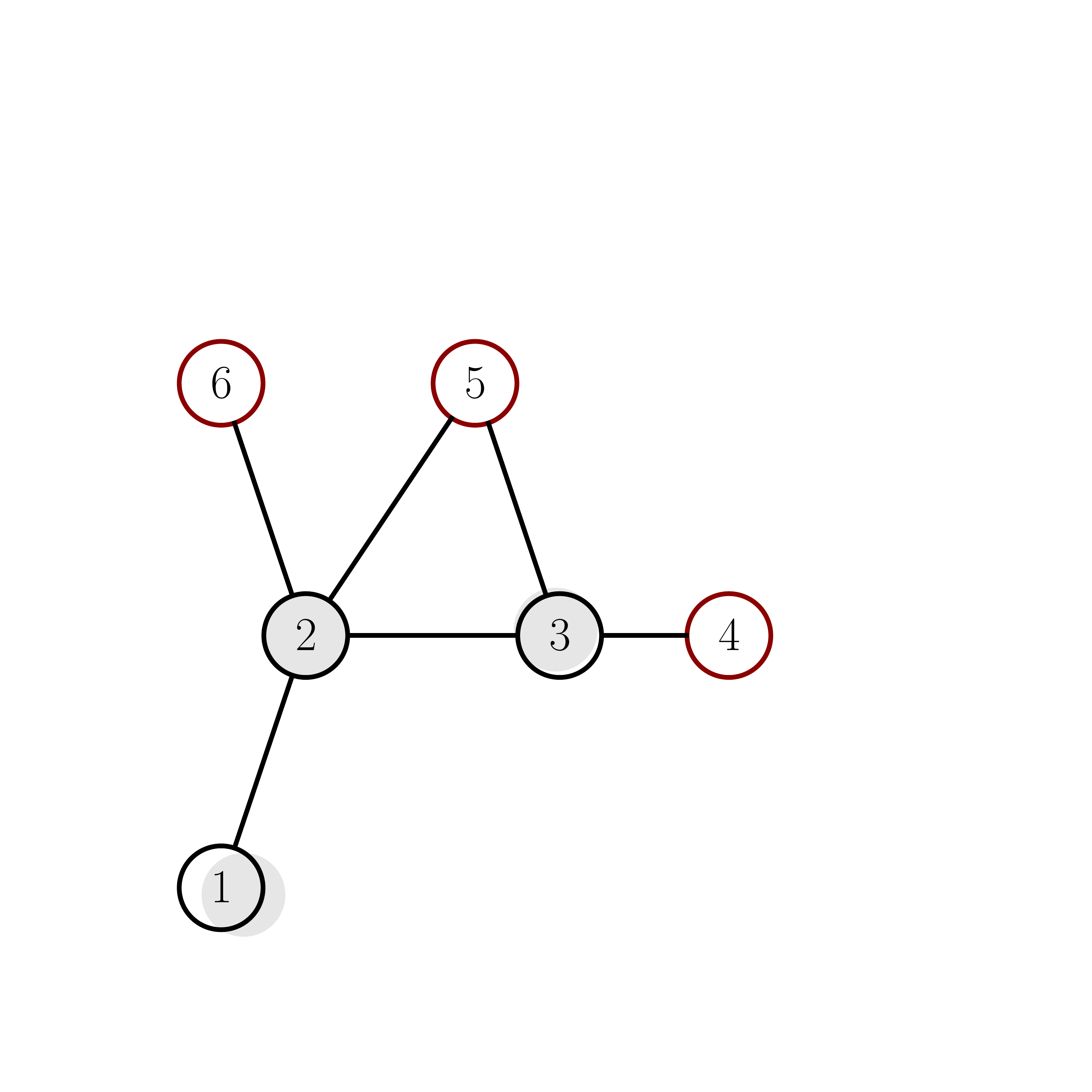}
%		\label{fig:subfigure_network_b_pred_wrong}}
%	\caption{Localization results by using Algorithm \ref{alg:ppa3}.}
%	\label{fig:position_pred_wrong_demo}
%\end{figure}

In this subsection, we apply Algorithm \ref{alg:ppa3} to a sensor network which contains $110$ target nodes and $4$ anchor nodes. All the target nodes are randomly deployed in a two dimensional area $[1,9]\times[1,9]$, and $4$ anchor nodes are located at $(2,2),(2,8),(8,2),(8,8)$ respectively. Each node moves randomly in a unit square centered at its initial position. 

Two nodes can communicate only if their distance is within $1$, which clearly results in time-varying communication graphs. The SNR of each node is set to $\text{SNR}=100$. The localization results of target nodes at time slots $\overline{t}=5$ and $\overline{t}=25$ are presented in Fig. \ref{fig:cppa_results_test_loc}. We observe that the localization accuracy is improved when $\overline{t}$ increases and all target nodes are well localized.  

At the time slot $\overline{t}=25$, we count the number of  target-anchor range measurements for each target node, which is shown in Table \ref{fig:cppa_results_test_commu_25}. One can observe that in our cooperative localization method, more than a half ($50.9\%$) of target nodes have never directly taken range measurements with respect to any anchor node. However, their positions can also be successfully localized by Algorithm \ref{alg:ppa3} as shown in Fig. \ref{fig:cppa_results_test_loc_25}, which confirms the benefit of using {\em cooperative} methods.

\begin{figure}
	\centering
	\subfigure[Position estimation of target nodes at $\overline{t}=5$]{ % https://tug.org/TUGboat/tb34-1/tb106thurnherr.pdf
		\includegraphics*[width=2.3in]{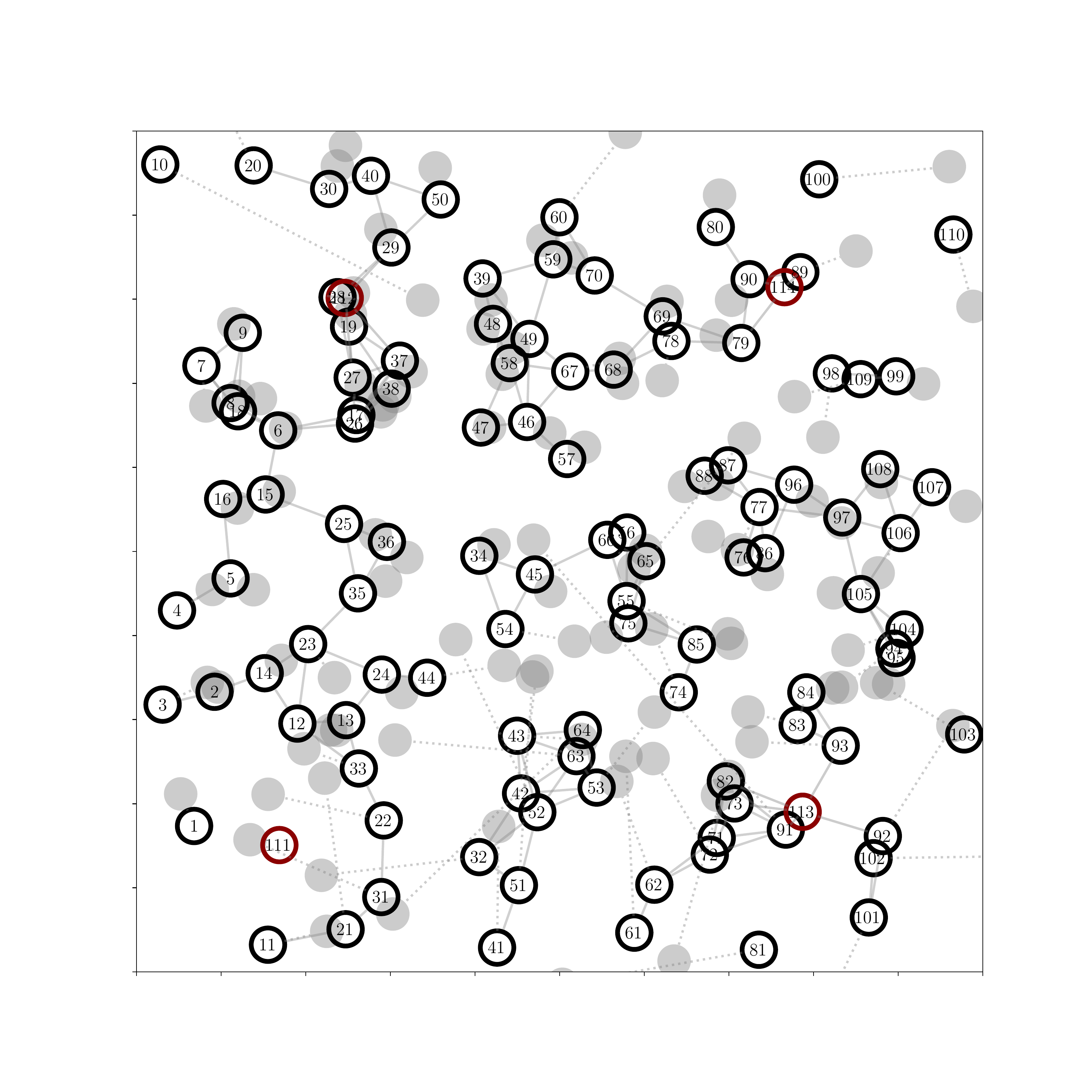}
		\label{fig:cppa_results_test_loc_5}}
%	\subfigure[Position estimation of target nodes at $\overline{t}=15$]{ % https://tug.org/TUGboat/tb34-1/tb106thurnherr.pdf
%		\includegraphics*[width=2.3in]{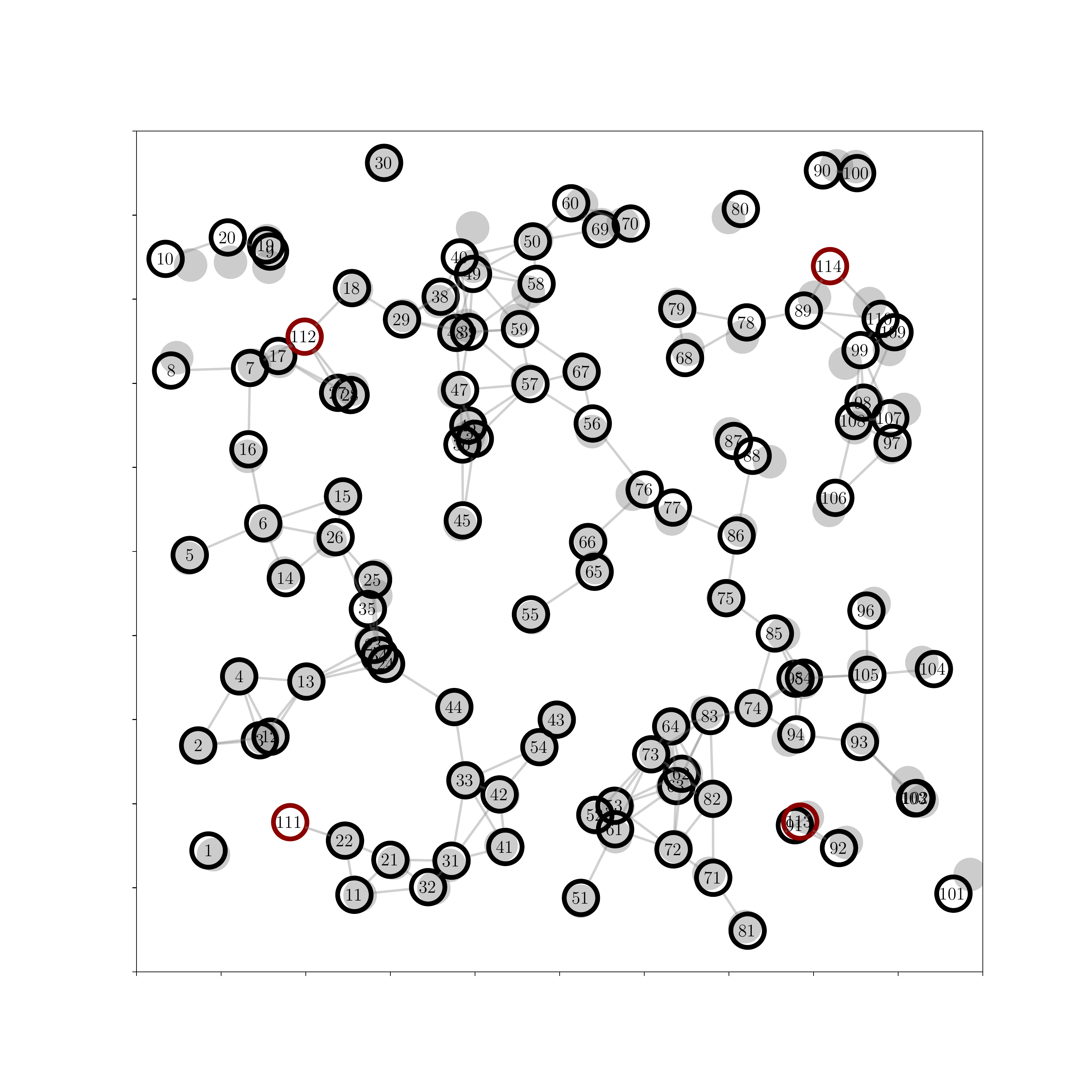}
%		\label{fig:cppa_results_test_loc_15}}
	\subfigure[Position estimation of target nodes at $\overline{t}=25$]{ % https://tug.org/TUGboat/tb34-1/tb106thurnherr.pdf
		\includegraphics*[width=2.3in]{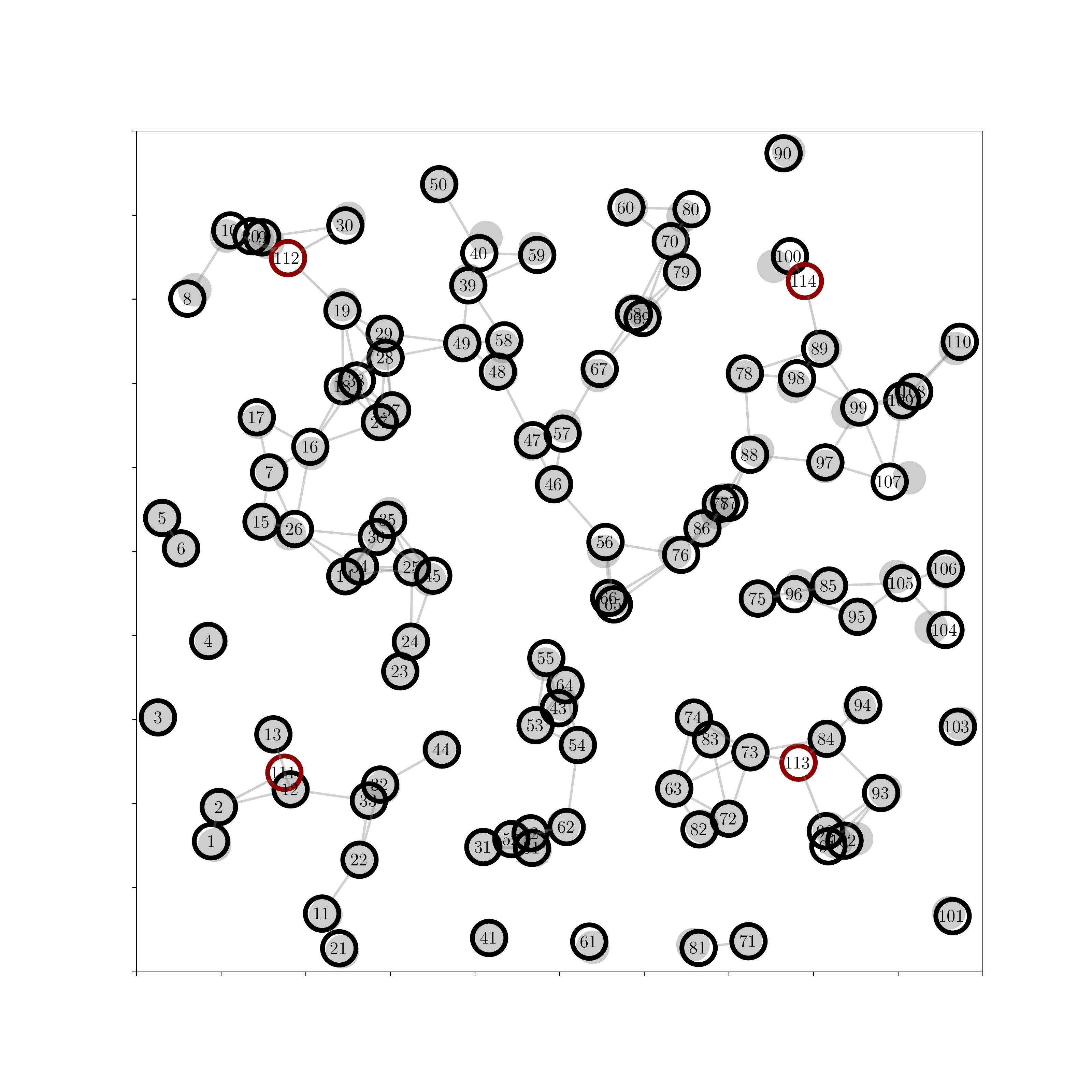}
		\label{fig:cppa_results_test_loc_25}}
	\caption{Localization results by Algorithm \ref{alg:ppa3}. {Red and black circles denote true positions of anchor nodes and target nodes. Gray circles are the estimated positions of target nodes. Gray edges represent the association between true positions and their estimated positions of target nodes.}}
	\label{fig:cppa_results_test_loc}
\end{figure}

%\begin{figure}[ht]
%	\begin{center}
%		\centering
%		\includegraphics*[width=3.5in]{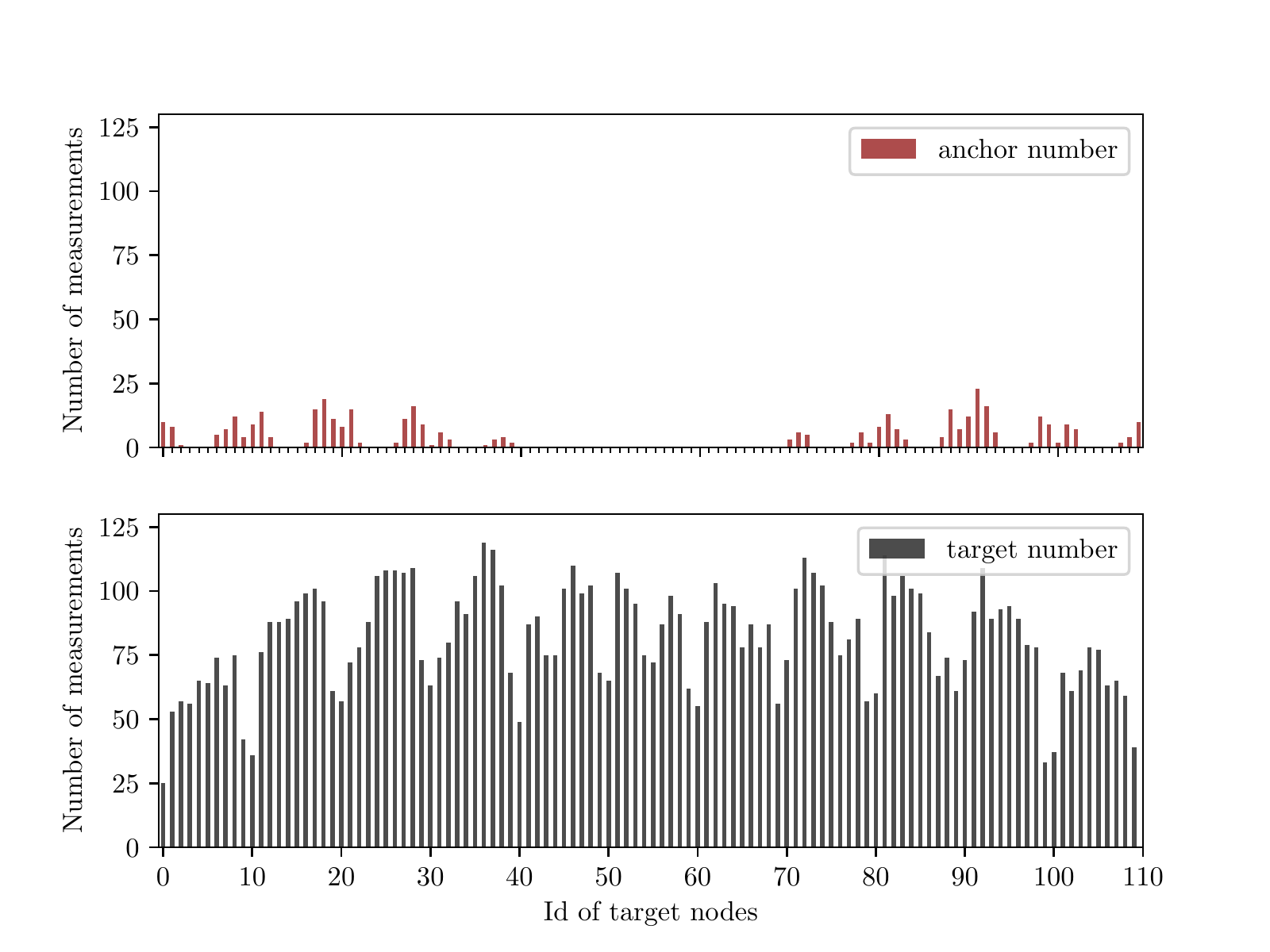}
%	\end{center}
%	\caption{Number of range measurements. The red bar represents the number of target-anchor range measurements for each target node, while the black bar represents the number of target-target range measurements.}
%	\label{fig:cppa_results_test_commu_25}
%\end{figure}

\begin{table}[t!] 
 \centering
 \caption{Distribution of target-anchor measurements over target nodes} 
 \label{fig:cppa_results_test_commu_25}
 \begin{tabular}{| >{\centering\arraybackslash}m{2.35cm} |cccccc|}
  \hline
  \# of target-anchor range measurements & $0$ & $\le 5$ & $\leq 10$ & $\leq 15$ & $\leq 20$ & $>20$\\ 
  \hline 
 \#  of target nodes & 56 &23 & 17&10&3&1 \\ 
    \hline 
  percentages (\%) & 50.9 & 20.9 & 15.5 & 9.1 & 2.7 & 0.9\\
  \hline
 \end{tabular}
\end{table}

Finally, we compare  Algorithm \ref{alg:ppa3} with the DPPA of Algorithm \ref{alg:ppa4} in a fixed graph $\mathcal{G}=(\mathcal{V},\mathcal{E})$ with {$110$ target nodes and $4$ anchor nodes}. Note that the DPPA is only applicable to a fixed graph. Define the average degree by
$$
\text{Deg} = {|\mathcal{E}|}/{|\mathcal{V}|}
$$
which characterizes the edge density of a network. By varying the average degree and the SNR, we implement both algorithms using the range measurements in a period of time $\overline{t}$ ($\overline{t}$ is set from $5$ to $25$). The resulting coordinate alignment relative errors are presented in Fig.\ref{fig:compare_dppa}, which illustrate that their performances are very close, and increasing the length of time interval $\overline{t}$ or the network density, both algorithms lead to better estimates. However, we recommend to use DPPA for a fixed graph as it involves simpler iterations and is a distributed version.

\begin{figure}[h!]
	\centering
	\subfigure[Mean relative error for SNR=10]{ 
		\includegraphics*[width=1.62in]{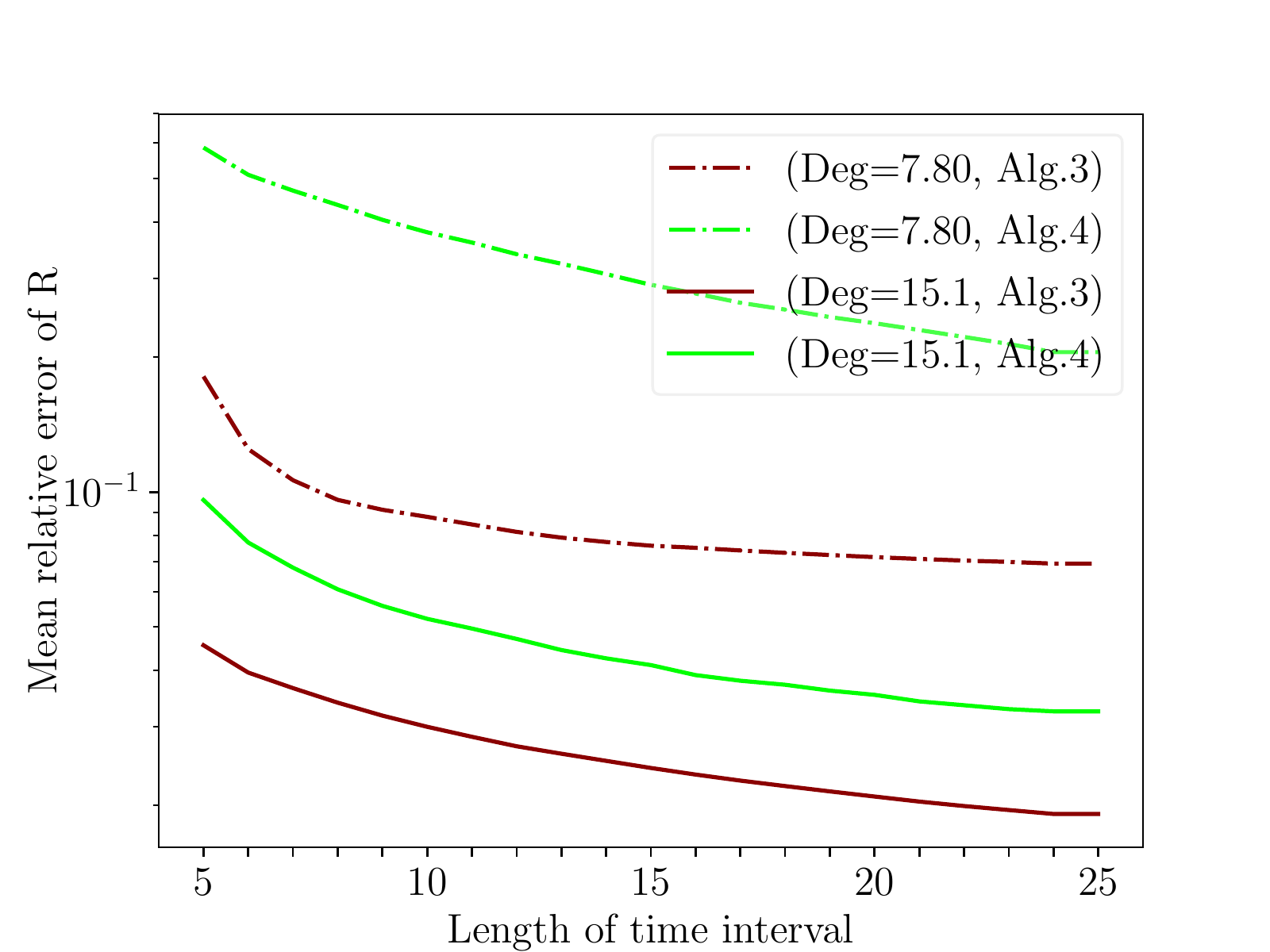}
		\label{fig:plot_compare_cppa_dppa_err_25}}
	\subfigure[Mean relative error for SNR=15]{
		\includegraphics*[width=1.62in]{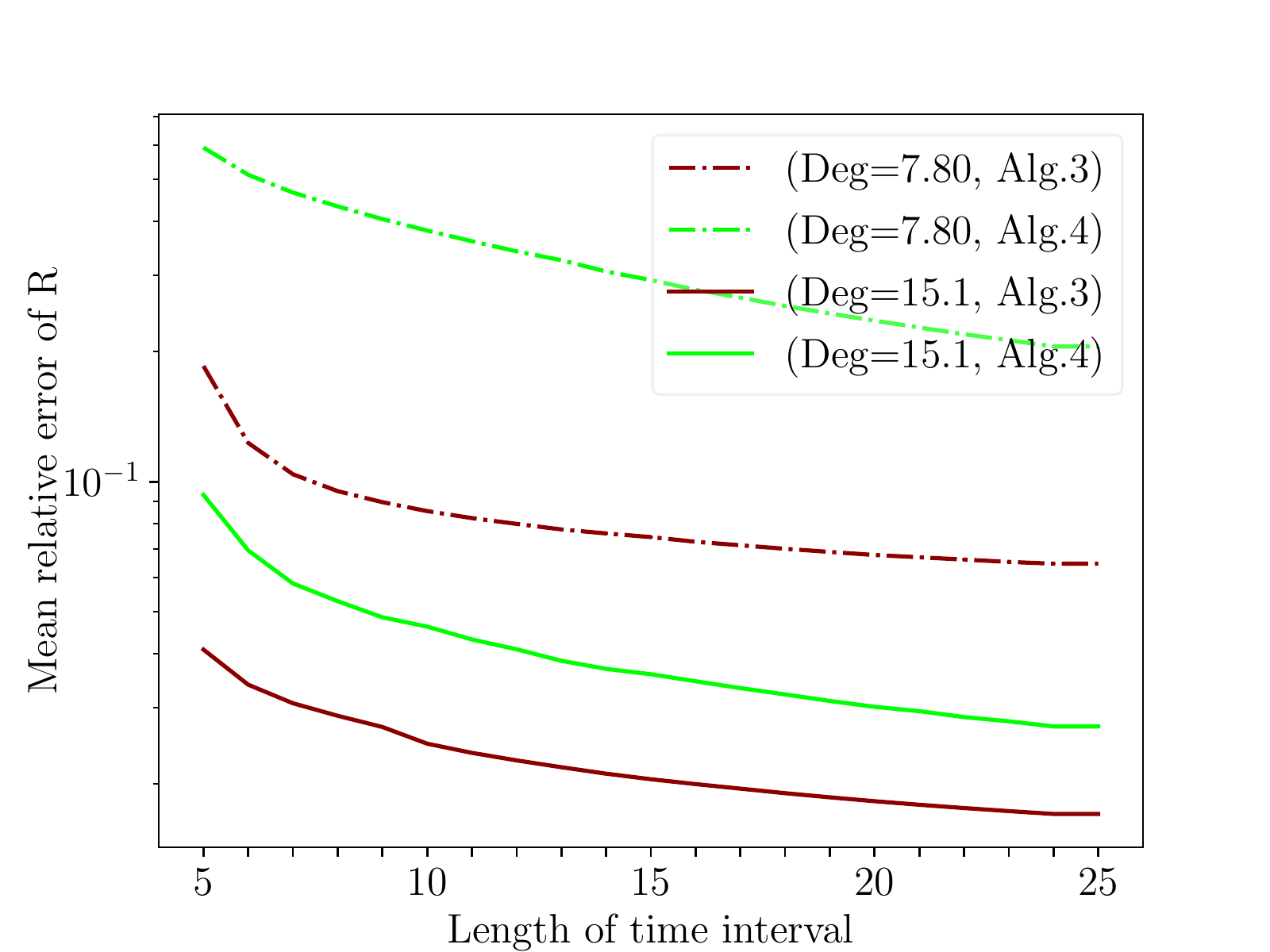}
		\label{fig:plot_compare_cppa_dppa_err_100}}
		\subfigure[Mean relative error for SNR=20]{
		\includegraphics*[width=1.62in]{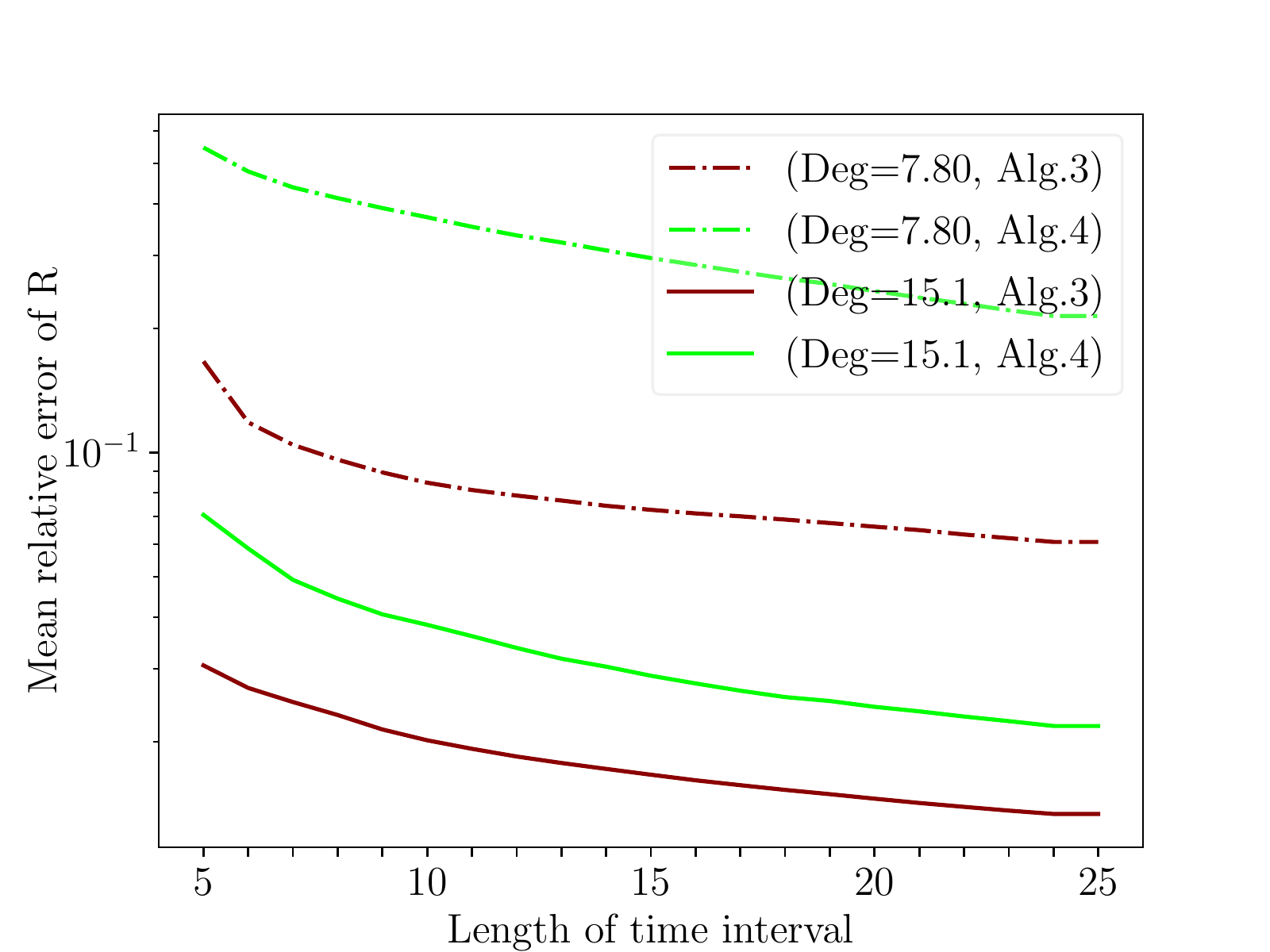}
		\label{fig:plot_compare_cppa_dppa_err_1000}}
	\subfigure[Mean relative error for SNR=35]{
		\includegraphics*[width=1.62in]{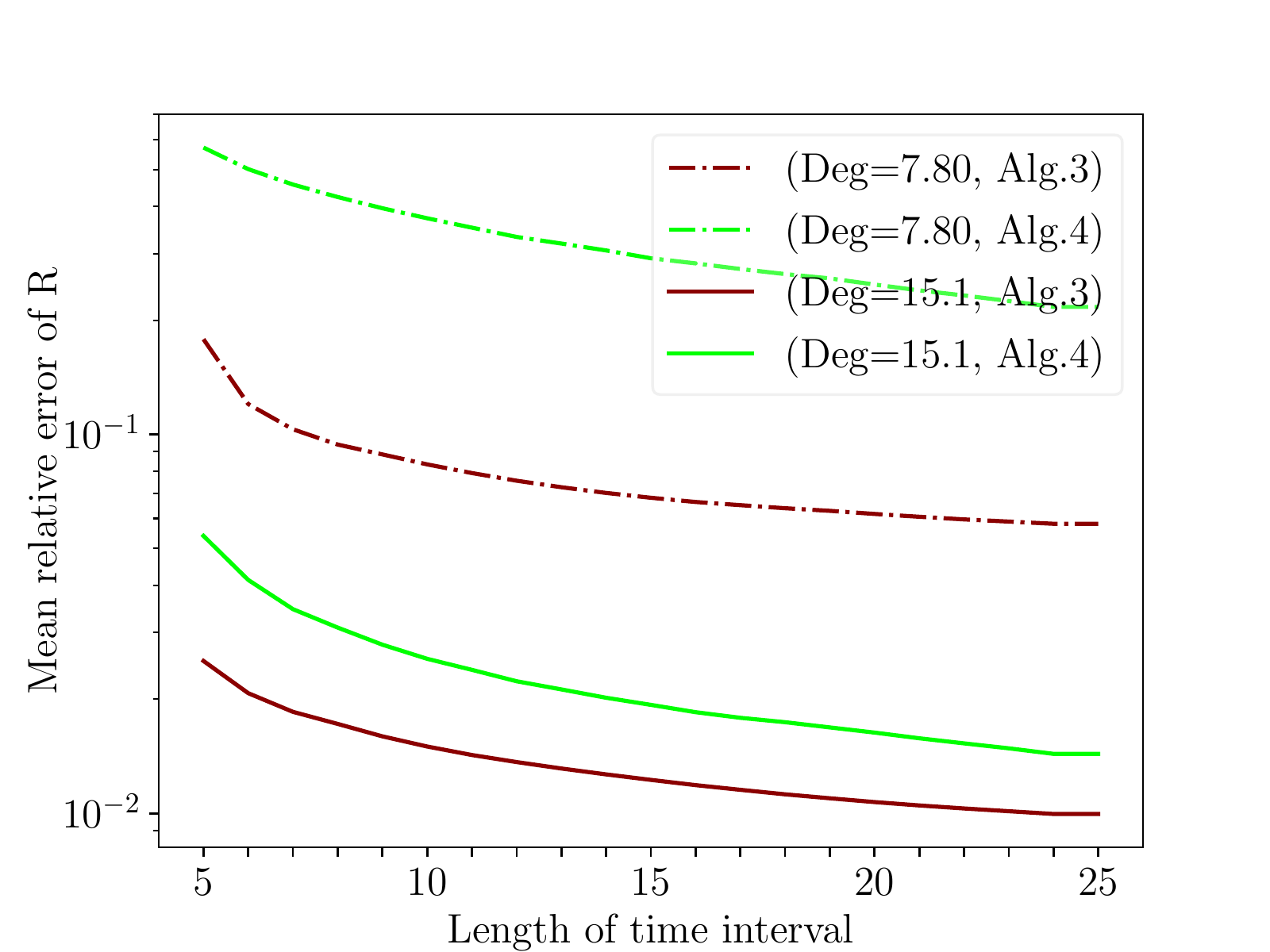}
		\label{fig:plot_compare_cppa_dppa_err_10000}}
	\caption{Relative errors by using Algorithm \ref{alg:ppa3} and  Algorithm \ref{alg:ppa4}.}
	\label{fig:compare_dppa}
\end{figure}

\section{Conclusion}\label{Conclusion}
This work considers the cooperative localization as a coordinate alignment problem using range measurements. To align the coordinate of a target node with an anchor node, we present PPA and RPA respectively. Then, the algorithms are generalized to the case of multiple target nodes in a sensor network. The effectiveness of all algorithms have been validated by numerical experiments. The state-of-the-art works such as the SDP and the SDP+GD are also compared with our work, which confirms the advantages of the proposed algorithms.

\bibliographystyle{IEEEtrans}
\bibliography{RefDBase}

\begin{IEEEbiography}
	[{\includegraphics[width=1in,height=1.25in,clip,keepaspectratio]{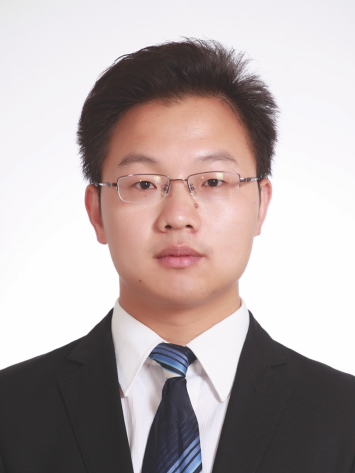}}]
	{Keyou You} (SM'17)  received the B.S. degree in Statistical Science from Sun Yat-sen University, Guangzhou, China, in 2007 and the Ph.D. degree in Electrical and Electronic Engineering from Nanyang Technological University (NTU), Singapore, in 2012.  Currently, he is a tenured Associate Professor in the Department of Automation, Tsinghua University, Beijing, China. His research interests include networked control systems, distributed optimization and learning, and their applications.
	Dr. You received the Guan Zhaozhi award in 2010 and the Asian Control Association Temasek Young Educator Award in 2019. He was selected to the National 1000-Youth Talent Program of China in 2014 and received the National Science Fund for Excellent Young Scholars in 2017. 
\end{IEEEbiography}
\vspace{-1cm}
\begin{IEEEbiography}
	[{\includegraphics[width=1in,height=1.25in,clip,keepaspectratio]{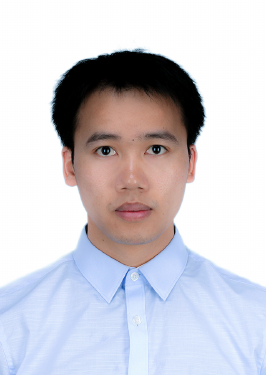}}]
{Qizhu Chen} received the B.E. degree in System Science and Engineering from Nanjing University, Nanjing, China, in 2015 and received the M.E degree in Control Science and Engineering from Tsinghua University, Beijing, China, in 2018. Currently, he is engaged in algorithm research and data mining in Beijing Science and Technology Co, three fast online. His research interests include distributed optimizations, machine learning, and their applications.
\end{IEEEbiography}
\vspace{-1cm}
\begin{IEEEbiography}
	[{\includegraphics[width=1in,height=1.25in,clip,keepaspectratio]{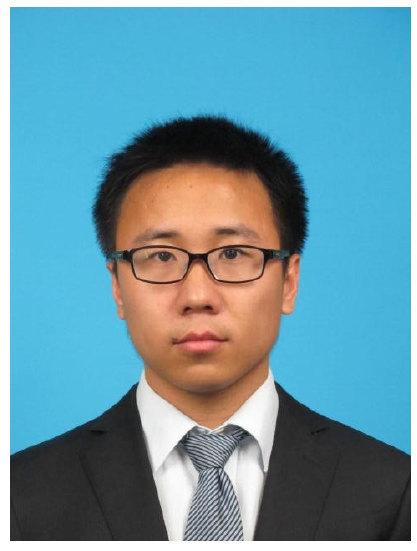}}]
{Pei Xie} received the B.E. degree and Ph.D degree in Control Science and Engineering from Tsinghua University, Beijing, China, in 2013 and 2019 respectively. Currently, he is engaged in algorithm research and data mining in JD.COM. His research interests include distributed optimizations, machine learning, operation research, and their applications.
\end{IEEEbiography}
\vspace{-1cm}
\begin{IEEEbiography}[{\includegraphics[width=1in,height=1.25in,clip,keepaspectratio]{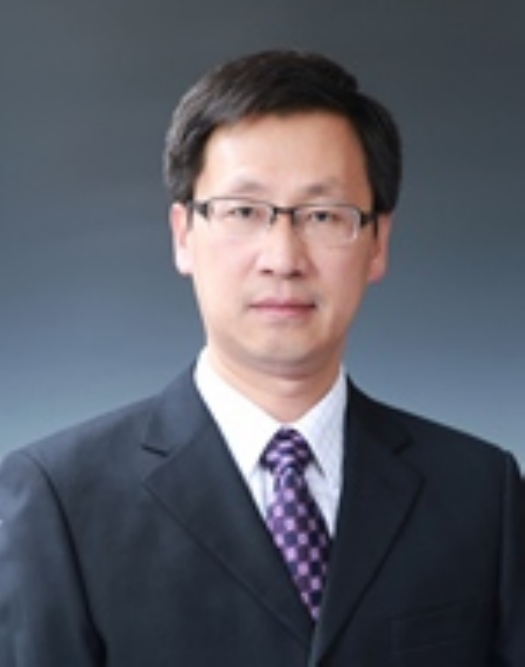}}]{Shiji Song}
received the Ph.D. degree in the Department of Mathematics from Harbin Institute of Technology in 1996. He is a professor in the Department of Automation, Tsinghua University. His research interests include system modeling, control and optimization, computational intelligence and pattern recognition.
\end{IEEEbiography}

\appendix
\subsection{Proof of Proposition \ref{prop_solvable}}\label{appendixa}
\begin{proof} We first prove that $f(R,T)$ is coercive \cite{bertsekas1999nonlinear} with respect to $T$, i.e., 
\bee\label{coercive}
\lim_{\twon{T}\rainfty}f(R, T)=\infty
\ene
where $\twon{T}=(\sum_{i=1}^n\twon{T_i}^2)^{1/2}$. Suppose that $\twon{T}\rainfty$, then there must exist some target node $i$ such that $\twon{T_i}\rainfty$. We have two exclusive scenarios.

 If $\bigcup_{t=1}^{\overline{t}}\cA_i(t)$ is nonempty,  e.g., there exists an anchor node $a$ such that $a\in\cA_i(t)$ for some $t$, i.e., the component $f_{ia}^{\cA}(t)$ exists in the objective function. Then, one can easily verify that $f_{ia}^{\cA}(t)$ tends to infinity as $\twon{T_i}\rainfty$. This implies that $\lim_{\twon{T_i}\rainfty}f(R,T)=\infty$. 

If $\bigcup_{t=1}^{\overline{t}}\cA_i(t)$ is empty, the target node $i$ must connect to an anchor node via some target node $j$ with a nonempty $\bigcup_{t=1}^{\overline{t}}\cA_j(t)$ in the union graph $\bigcup_{t=1}^{\overline{t}}\cG(t)$ since otherwise, the target node $i$ is disconnected to anchor nodes. Particularly, let $(j_0, j_1),\ldots, (j_{k-1},j_k)\in \bigcup_{t=1}^{\overline{t}}\cE(t)$ be the consecutive edges from node $i=j_0$ to node $j=j_k$. Suppose 
$$
\lim_{\twon{T_i}\rainfty}\sum_{t=1}^{\overline{t}}\sum_{v=0}^{k-1}f_{j_vj_{v+1}}^{\mathcal{T}}(t,R,T)<\infty,
$$
 it follows from (\ref{eq:sub_objective_function}) that $\twon{T_{j_0}}=\ldots=\twon{T_{j_k}}=\infty$. Since $\bigcup_{t=1}^{\overline{t}}\cA_j(t)$ is nonempty, it immediately implies that $\lim_{\twon{T_j}\rainfty}f(R,T)=\infty$. 

Overall, (\ref{coercive}) is proved.  Since the rotation group $\text{SO}(3)$ is compact and $f(R,T)$ is continuous, the rest of proof follows from the Weierstrass' theorem \cite[Proposition A.8]{bertsekas1999nonlinear}. 
\end{proof}

\subsection{Proof of Proposition \ref{prop_batch}}
\label{appendixb}
\begin{proof}
For any fixed $R \in \text{SO}(3)$, it is obvious that $T=\overline{y}^k-R\overline{p}^l$ minimizes the objective function of (\ref{master}) with respect to $T$. Let $T=\overline{y}^k-R\overline{p}^l$ in the objective function of (\ref{master}). Then, it follows that
\begin{equation*}
\begin{split}
&\sum\nolimits_{t=1}^{\overline{t}}{\lVert Rp^l(t) +(\overline{y}^k - R\overline{p}^l)- y^k(t)\rVert^2}\\ 
&= \sum\nolimits_{t=1}^{\overline{t}}{- 2\,(y^k(t) - \overline{y}^k)'R(p^l(t) - \overline{p}^l)} + c\\
&= -2\, \text{trace}\left( R'P^k \right) + c
\end{split}
\end{equation*}
where $c$ is independent of $R$ and is not explicitly given here. 

Then, $R^{k+1}$ is obtained via the minimization problem
\begin{equation*}
\begin{aligned}
R^{k+1}&=\argmin_{R\in\text{SO}(3)} -2\cdot \text{trace}\left( R'P^k \right)\\
&=\argmin_{R\in\text{SO}(3)}\twon{R-P^k}_F^2\\
&=P_{\text{SO}(3)}(P^k),
\end{aligned}
\end{equation*}
where the second equality follows from the fact that $\twon{R-P^k}_F^2=\text{trace}((R-P^k)'(R-P^k))=-2\text{trace}(R'P^k)+(P^k)'P^k+I$ for any $R\in \text{SO}(3)$.
\end{proof}

\subsection{Proof of Proposition \ref{prop_con}}
\label{appendixc}
{\begin{proof} By (\ref{eq:y_parallel_projection}) and $y^{k-1}(t)\in\cS(t), \forall t\in[1:\bar{t}]$, it follows
 $\twon{R^kp^l(t)+T^k-y^k(t)}\le \twon{R^kp^l(t)+T^k-y^{k-1}(t)},$
which implies that $$g(q^k)\le g(R^k, T^k,y_1^{k-1},\ldots,y_{\bar{t}}^{k-1}).$$ 
By  (\ref{master}), we obtain that
 $\sum\nolimits_{t=1}^{\overline{t}}{\lVert R^{k}p^l(t) + T^{k} - y^{k-1}(t) \rVert^2}\le \sum\nolimits_{t=1}^{\overline{t}}{\lVert Rp^l(t) + T - y^{k-1}(t) \rVert^2}$ for all $R\in\text{SO}(3)$ and $T\in\bR^3$. Since $R^{k-1}\in\text{SO}(3)$ and $T^{k-1}\in\bR^3$, this implies that 
 $$g(R^{k}, T^{k},y_1^{k-1},\ldots,y_{\bar{t}}^{k-1})\le g(q^{k-1}).$$
 Thus, it holds that $g(q^k)\le g(q^{k-1})$. 
Since $\text{SO}(3)$ and $\cS(t)$ are compact, it follows from (\ref{eq:r_t_solution}) that $\{q^k\}$ is a bounded sequence. Thus, it contains a convergent subsequence. 
\end{proof}}

\subsection{Proof of Proposition \ref{prof_rpa}}
\label{appendixd}
\begin{proof} Clearly, both $\bar{y}(t)$ and $\bar{p}^l(t)$ compute the time average of their associated vectors and can be expressed as
$$
\bar{y}(t)=\frac{1}{t}\sum_{i=1}^t y(i)~\text{and}~\bar{p}^l(t)=\frac{1}{t}\sum_{i=1}^t p^l(i).
$$
Moreover, it holds that
$$
P(t)=\sum_{i=1}^t (y(i)-\bar{y}(t))(p^l(i)-\bar{p}^l(t))'.
$$

In fact, let $\tilde{y}(t)=y(t)-\bar{y}(t-1)$ and $\tilde{p}^l(t)=p^l(t)-\bar{p}^l(t-1)$. Then, it follows from (\ref{ypl}) that
\bee
\begin{split}
P(t)&=\sum_{i=1}^t (y(i)-\bar{y}(t-1)-\frac{1}{t}\tilde{y}(t))\\
&~~~~~\times(p^l(i)-\bar{p}^l(t-1)-\frac{1}{t}\tilde{p}^l(t))'\\
&=P(t-1)+\frac{t-1}{t^2}\tilde{y}(t) \tilde{p}^l(t)' +(1-\frac{1}{t})^2\tilde{y}(t) \tilde{p}^l(t)' \\
&=P(t-1)+(1-\frac{1}{t})\tilde{y}(t) \tilde{p}^l(t)'.\nonumber
\end{split}
\ene

The rest of proof follows directly from that of Proposition \ref{prop_batch} and is omitted. 
\end{proof}

\end{document}